\patchcmd{\@maketitle}{\begin{center}}{\begin{flushleft}}{}{}
\patchcmd{\@maketitle}{\begin{tabular}[t]{c}}{\begin{tabular}[t]{@{}l}}{}{}
\patchcmd{\@maketitle}{\end{center}}{\end{flushleft}}{}{}
\let\oldnl\nl
\newcommand{\nonl}{\renewcommand{\nl}{\let\nl\oldnl}}
\renewenvironment{abstract}
{\small\section*
{\bfseries\noindent{\raisebox{-.15\baselineskip}{\normalsize\abstractname}}\hrulefill} 
}
\newtheorem{theorem}{Theorem}
\newtheorem{lemma}{Lemma}
\newtheorem{corollary}{Corollary}
\author{hkhj\\hjk}
\author{hjjkhj\\hjk}
\begin{document}
 \pagenumbering{gobble}

\title{Distributed Localization of Wireless Sensor Network Using Communication Wheel}
    \author{    
    Kaustav Bose\\
    \small \emph{Department of Mathematics, Jadavpur University, India}\\
    \small \emph{kaustavbose.rs@jadavpuruniversity.in}\\\\
    Manash Kumar Kundu\\
    \small \emph{Gayeshpur Government Polytechnic, Kalyani, India}\\
    \small \emph{manashkrkundu.rs@jadavpuruniversity.in}\\\\
    Ranendu Adhikary\\
    \small \emph{Department of Mathematics, Jadavpur University, India}\\
    \small \emph{ranenduadhikary.rs@jadavpuruniversity.in}\\\\
    Buddhadeb Sau\\
    \small \emph{Department of Mathematics, Jadavpur University, India}\\
    \small \emph{buddhadeb.sau@jadavpuruniversity.in}\\\\
    }
    \date{}
    
  \maketitle

\begin{abstract}
%

We study the \emph{network localization problem}, i.e., the problem of determining node positions of a wireless sensor network modeled as a unit disk graph.  In an arbitrarily deployed network, positions of all nodes of the network may not be uniquely determined. It is known that even if the network corresponds to a unique solution, no polynomial-time algorithm can solve this problem in the worst case, unless RP = NP. So we are interested in algorithms that efficiently localize the network partially. A widely used technique that can efficiently localize a uniquely localizable portion of the network is \emph{trilateration}: starting from three \emph{anchors} (nodes with known positions), nodes having at least three localized neighbors are sequentially localized. However, the performance of trilateration can substantially differ for different choices of the initial three anchors. In this paper, we propose a distributed localization scheme with a theoretical characterization of nodes that are guaranteed to be localized. In particular, our proposed distributed algorithm starts localization from a \emph{strongly interior node} and provided that the subgraph induced by the strongly interior nodes is connected, it  localizes all nodes of the network except some \emph{boundary nodes} and \emph{isolated weakly interior nodes}.
\end{abstract}
\section{Introduction}

 
 A \emph{wireless sensor network} (WSN) is a wireless network consisting of a large number of small autonomous sensors spatially distributed in a region to monitor physical or environmental parameters. The sensor nodes are low-cost, low-power, autonomous, multi-functional devices equipped with sensing, processing, and communication capabilities. The knowledge of the physical location of sensor nodes is essential in many applications where the geographical information of the sensed data is important, for example, event detection, environment and habitat monitoring, target tracking, pervasive medical care, etc. The positional information of the nodes also supports many fundamental location-aware protocols, like geographic routing, topology control, coverage, etc. One method of determining the location of the nodes is by equipping the sensor nodes with Global Positioning System (GPS). However, the installation of GPS on each node of a large scale WSN is expensive and  the power consumption of GPS reduces the battery life of the sensor nodes. Moreover, it is not suitable in dense forests, underground or indoor environment where GPS signals are unavailable. Therefore, novel schemes have been proposed to determine the positions of the nodes in a network where only some special nodes called \emph{anchors} are aware of their positions with respect to some global coordinate system (e.g., \cite{savvides2001dynamic,moore2004robust,bulusu2000gps,albowicz2001recursive,biswas2006semidefinite,biswas2008distributed}). In these schemes, the nodes can measure the distances to their neighboring nodes and using these distance information they try to determine their positions. This process of computing the positions of the nodes is called \emph{range-based network localization} or simply \emph{network localization}.

 The network localization problem can be abstracted as the following: given a weighted graph with edge weights equal to the distances between the respective nodes and coordinates of some nodes, called anchors, with respect to some coordinate system, we have to compute the coordinates of all other nodes in that coordinate system. A network, with the given positions of anchors and distances between adjacent nodes, is said to be \emph{uniquely localizable} if all nodes of the network have unique positions consistent with the given data, i.e., there is a unique solution. Obviously, if the given instance corresponds to multiple feasible solutions, the actual positions of the nodes can not be determined. The unique localizability of a network is completely determined by certain combinatorial properties of the network graph and the number of anchors. \emph{Graph rigidity theory} \cite{eren2004rigidity,hendrickson92,jackson2005connected} provides the following necessary and sufficient condition for unique localizability \cite{eren2004rigidity}: a network is uniquely localizable if and only if it has at least 3 anchors and the network graph is \emph{globally rigid} (See Section \ref{sec: rigid} for definition). However, unless a network is highly dense and regular, it is unlikely that the network is globally rigid. But even if a network is not globally rigid as a whole, a large portion of the network may be globally rigid. For the remaining nodes, there are multiple feasible solutions and hence, their actual positions can not be determined. In the decision version of the problem, also known as \textsc{Graph Embedding} or \textsc{Graph Realization} problem, given a weighted graph we have to determine whether there is an embedding of the graph in Euclidean plane so that the distances between the adjacent vertices are equal to the edge weights. This problem has been shown to be strongly NP-hard \cite{saxe1979}. In \cite{eren2004rigidity}, it is shown that the problem remains NP-hard even when the graph is globally rigid. However, these results are for general graphs. In a sensor network, only nodes that are within a certain communication range, say $r$, can measure their relative distances. Therefore, the network can be better modeled as a unit disk graph: two nodes are adjacent if and only if their distance is $\leq r$. In this version of the problem, apart from the coordinates of the anchors and the distances between the adjacent nodes, we have a third type of information: the distances between the non-adjacent nodes are $> r$. The decision version of this problem, also known as \textsc{Unit Disk Graph Reconstruction} problem, is that given a weighted graph with weights $\leq r$, we have to determine whether there is an embedding of the graph in Euclidean plane so that 1) the distances between the adjacent vertices are equal to the edge weights, and 2) the distance between any pair of non-adjacent nodes is $> r$. It is shown in \cite{aspnes2004} that \textsc{Unit Disk Graph Reconstruction} is NP-hard. Therefore, there is no efficient algorithm that solves the localization problem in the worst case unless P = NP. It is further shown in \cite{aspnes2004} that a similar result holds even for instances that have unique reconstructions: there is no efficient randomized algorithm that solves the localization problem even for instances that have unique reconstructions unless RP = NP.

Since a real life instance may not have unique solution and even if it has, it is unlikely that there is an efficient algorithm that solves the  problem, we are interested in efficient heuristics that partially localize the network. A very popular technique is  \emph{trilateration} which efficiently localizes a globally rigid subgraph of the network. It is based on the simple fact that the position of a node can be determined from its distance from three non-collinear nodes with known coordinates. The algorithm starts with at least three anchor nodes and then nodes adjacent to at least three nodes with known coordinates are sequentially localized. It is computationally efficient and very easy to implement in distributed setting, thus widely used in practice. In this paper, we are interested in \emph{anchor-free localization}, i.e., there are no anchor nodes. Since for localization at least three anchor nodes are necessary, in the anchor-free case, some three mutually adjacent nodes of the network fix their coordinates (respecting their mutual distances) in some virtual coordinate system. These three nodes play the role of anchors. However, in case of trilateration, the performance of the algorithm can drastically differ for different choices of the initial three nodes. In this paper, we address this issue and propose a distributed anchor-free localization scheme with a theoretical characterization of nodes that are guaranteed to be localized. In our approach, a node, based on its local information, can categorize itself as either \emph{strongly interior}, \emph{non-isolated weakly interior}, \emph{isolated weakly interior} or \emph{boundary}. Provided that the \emph{strong interior}, i.e., the subgraph induced by the set of strongly interior nodes, is connected, one strongly interior node is chosen by a leader election protocol. Our sequential localization algorithm starts  from that strongly interior node, and it is theoretically guaranteed to localize all nodes except some boundary and isolated weakly interior nodes. Due to the space restrictions, it is not possible to present a comprehensive survey of the large number of works on localization (e.g.,  \cite{savvides2001dynamic,moore2004robust,bulusu2000gps,albowicz2001recursive,biswas2006semidefinite,biswas2008distributed,rong2006angle,lederer2009connectivity,goldenberg2006localization,yang2010beyond,fang2009sequential,he2003range,goldenberg2005network,shang2004improved,shang2003localization,ji2004sensor,baggio2008monte,sorbelli2018range} etc.) in the literature. The readers are instead referred to the surveys \cite{chowdhury2016advances,liu2010location,mao2007,wang2010survey} and the references therein.

\section{Preliminaries}

\subsection{Basic Model and Assumptions}

The mathematical model of wireless sensor network considered in this work is described in the following:

\begin{itemize}

 \item A set of $n$ sensors is arbitrarily deployed in $\mathbb{R}^2$. Each sensor node has computation and wireless
communication capabilities. 

\item There is a constant $r > 0$, called the \emph{communication range}, such that any two sensor nodes can directly communicate with each other if and only if the distance between them is $\leq r$. This implies that the corresponding communication network can be modeled as a \emph{unit disk graph (UDG)}: two nodes are adjacent if and only if they are at most $r$ distance apart. We assume that this graph is connected. Note that if the graph is not connected, then it is impossible to localize the entire network consistently. 


\item The euclidean distance between a pair of sensors can be measured directly and accurately if and only if they are at most $r$ distance apart. Hence, if a sensor node can directly communicate with another node, then it also knows the distance between them. 

%

\item The sensor nodes are assumed to be in general positions, i.e., no three points are collinear. This is not a major assumption, as the nodes of a randomly deployed network are almost always in general positions.

\end{itemize}

\subsection{Definitions and Notations}

Let $\mathcal{V}$ be the set of $n$ sensors at positions in $\mathbb{R}^2$. The corresponding wireless sensor network can be modeled as an undirected edge-weighted graph $\mathcal{G} = (\mathcal{V}, \mathcal{E}, w)$, where
\begin{itemize}

 \item $\mathcal{V} = \{ v_1, \ldots, v_n \}$ is the set of sensors,
 
 \item  $(v_i, v_j) \in \mathcal{E}$, i.e., $v_i$ is adjacent to $v_j$ if and only if  $d(v_i, v_j) \leq r$, where $r$ is the communication range of the sensors,
 
 \item the edge-weight $w: \mathcal{E} \longrightarrow \mathbb{R}$ is given by $w(v_i, v_j) = d(v_i, v_j)$.

\end{itemize}

We call $\mathcal{G}$ the \emph{underlying network graph} of the wireless sensor network. As mentioned previously, we assume that the graph $\mathcal{G}$ is connected.

%
%

\begin{figure}[htb!]
\centering
\fontsize{10pt}{10pt}\selectfont
\def\svgwidth{0.4\textwidth}
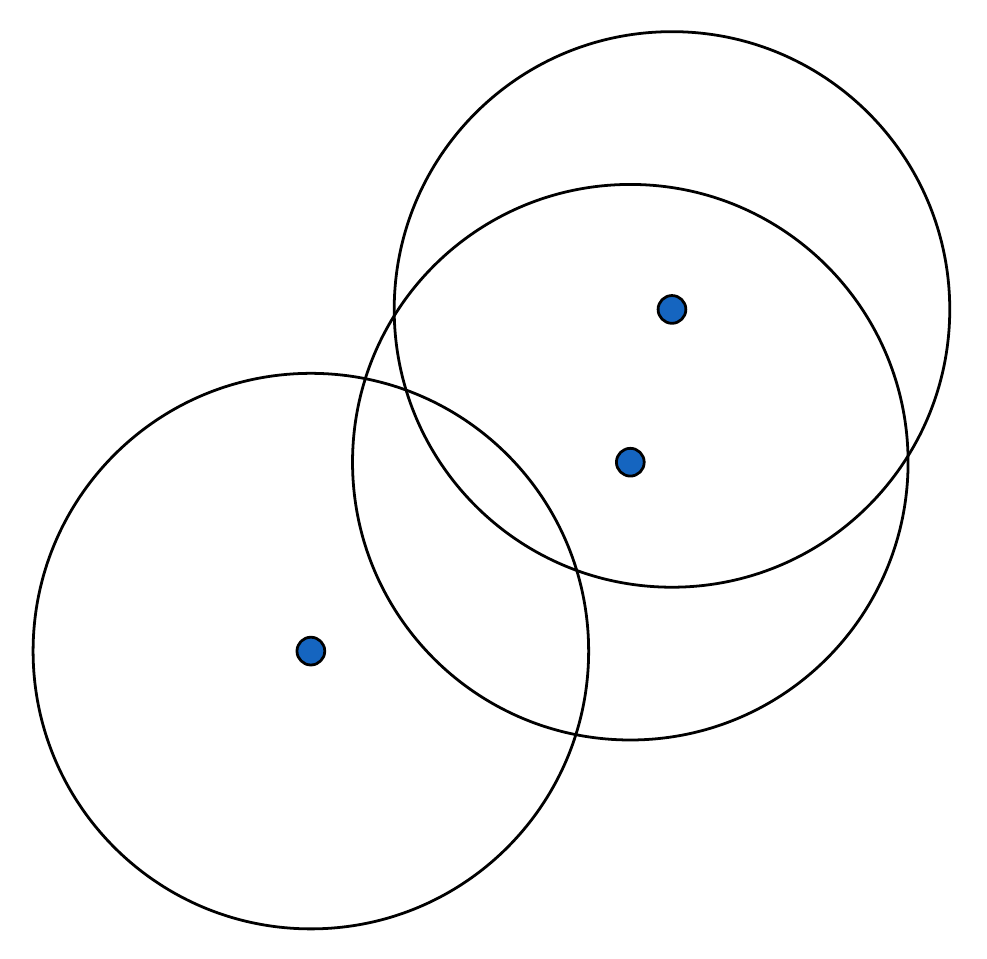
\caption{$u$ is not a maximal neighbor of $v$ as $u \preceq_v u'$.}
\label{eclipse_fig}
\end{figure}

A sensor node $v \in \mathcal{V}$ is called an \emph{interior node} if for every point $z \in \partial(\mathcal{Z}(v))$, where $\partial(\mathcal{Z}(v))$ is the boundary of $\mathcal{Z}(v)$, we have $z \in \mathcal{Z}(v')$ for some $v' \in \mathcal{V} \setminus \{v\}$. If $v \in \mathcal{V}$ is not an interior node, then it is called a \emph{boundary node}. An interior node $v \in \mathcal{V}$ is said to be \emph{a strongly interior node} if every node in $\mathcal{N}(v)$ is an interior node. An interior node $v \in \mathcal{V}$ is said to be \emph{a weakly interior node} if at least one node in $\mathcal{N}(v)$ is a boundary node. A weakly interior node is said to be \emph{isolated} if it is not adjacent to any strongly interior node.  The subgraph of $\mathcal{G}$ induced by the set of all interior nodes is called the \emph{interior} of $\mathcal{G}$. Similarly, the subgraph of $\mathcal{G}$ induced by the set of all strongly interior nodes is called the \emph{strong interior} of $\mathcal{G}$.

 If $v, v' \in \mathcal{V}$ are adjacent to each other, then we shall refer to the intersections of $\partial(\mathcal{Z}(v))$ and $\partial(\mathcal{Z}(v'))$ as their \emph{boundary intersections}. We shall denote these boundary intersections as $CW(v, v')$ and $CCW(v, v')$ according to the following rule: if one traverses from $CCW(v, v')$ to $CW(v, v')$ along $\partial(\mathcal{Z}(v))$ in clockwise direction, it sweeps an angle $< \pi$ about the center $v$.

Given a node $v$, we define a partial order relation $\preceq_v$ on $\mathcal{N}(v)$ as following: for $u, u' \in \mathcal{N}(v)$, $u \preceq_v u'$ if and only if $\mathcal{Z}(u) \cap \partial(\mathcal{Z}(v)) \subseteq \mathcal{Z}(u') \cap \partial(\mathcal{Z}(v))$. See Fig. \ref{eclipse_fig}. A node $u \in \mathcal{N}(v)$ is said to be a \emph{maximal} neighbor of $v$ if it is a maximal element in $\mathcal{N}(v)$ with respect to $\preceq_v$, i.e., there is no $u' \in \mathcal{N}(v) \setminus \{u\}$, such that $u \preceq_v u'$.


%
%
%


\subsection{Some Results from Graph Rigidity Theory}\label{sec: rigid}

In this section, we present some basic definitions and results in graph rigidity. For a detailed exposition on graph rigidity, the readers are referred to \cite{jackson2005connected}.

A $d$-dimensional \emph{framework} is a pair $(G, \rho)$, where $G = (V, E)$ is a connected simple graph and the \emph{realization}  $\rho$ is a map $\rho : V \longrightarrow \mathbb{R}^d$. Two frameworks $(G, \rho_1)$ and $(G, \rho_2)$ are said to be \emph{equivalent} if $d(\rho_1(u),  \rho_1(v)) = d(\rho_2(u), \rho_2(v))$, for all $(u,v) \in E$. Frameworks $(G, \rho_1)$ and $(G, \rho_2)$ are said to be \emph{congruent} if $d(\rho_1(u), \rho_1(v)) = d(\rho_2(u), \rho_2(v))$, for all $u, v \in V$. In other words, two frameworks are said to be congruent if one can be obtained from another by an isometry of $\mathbb{R}^d$. A realization is \emph{generic} if the vertex coordinates are algebraically independent over rationals. The framework $(G, \rho)$ is \emph{rigid} if $\exists$ an $\varepsilon > 0$ such that if $(G, \rho')$
is equivalent to $(G, \rho)$ and $d(\rho(u), \rho'(u)) < \varepsilon$ for all $u \in V$, then $(G, \rho')$ is congruent
to $(G, \rho)$. Intuitively, it means that the framework can not be continuously deformed. $(G, \rho)$ is said to be \emph{globally rigid} if every framework which is equivalent to $(G, \rho)$ is congruent to $(G, \rho)$. It is known
\cite{whiteley1996some} that rigidity is a \emph{generic property}, that is, the rigidity of $(G, \rho)$ depends only
on the graph $G$, if $(G, \rho)$ is generic. The set of generic realizations is dense in the realization space and thus almost all realizations of a graph are generic.  So, we say that a graph $G$ is rigid in $\mathbb{R}^2$ if every generic realization of $G$ in $\mathbb{R}^2$ is rigid.

\begin{theorem}\cite{jackson2005connected}\label{jack}
  A graph $G$ is \emph{globally rigid} in $\mathbb{R}^2$ if and only if
either $G$ is a complete graph on at most three vertices or $G$ is 3-connected, rigid and remains rigid even after deleting an edge.
\end{theorem}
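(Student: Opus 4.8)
This statement is the Jackson--Jordán characterization of generic global rigidity in the plane, so I would prove the two implications separately and expect the sufficiency direction to be the real work. Throughout I fix a generic realization $\rho$, so that by genericity both rigidity and global rigidity depend only on $G$.

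For necessity, suppose $(G,\rho)$ is globally rigid and $G$ is not a complete graph on at most three vertices; I would recover both structural conditions by the reflection/flex arguments of Hendrickson \cite{hendrickson92}. If $G$ had a $2$-vertex cut $\{a,b\}$ separating the remaining vertices into nonempty sets $V_1$ and $V_2$, then every edge lies inside $V_1\cup\{a,b\}$ or inside $V_2\cup\{a,b\}$; reflecting $\rho$ restricted to $V_1$ across the line through $\rho(a)$ and $\rho(b)$ preserves all edge lengths yet, for generic $\rho$, produces an incongruent framework, contradicting global rigidity. Hence $G$ is $3$-connected. Similarly, if $G-e$ were flexible for some edge $e=(u,v)$, then since $G$ itself is rigid (being globally rigid) the length $d(\rho(u),\rho(v))$ must be nonconstant along the nontrivial flex of $G-e$; consequently some other configuration on this flex attains the same value $w(e)$ and yields a framework equivalent to $(G,\rho)$ but incongruent to it, again a contradiction. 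Thus $G$ stays rigid after deleting any edge.

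For sufficiency, the hard direction, I would use the stress-matrix criterion: if a generic $d$-dimensional framework carries an equilibrium stress whose stress matrix has the maximum possible rank $|V|-d-1$, then the framework is globally rigid (Connelly's sufficient condition). It therefore suffices to prove that every $3$-connected, redundantly rigid graph on $n\geq 4$ vertices has a generic planar realization admitting such a maximal-rank stress. I would argue this by induction on $n$ using an inductive generation of the class: the base case is $K_4$, and every larger member reduces to a smaller one in the class by a small repertoire of moves (edge deletion together with a Henneberg-type $1$-extension / vertex-splitting reduction), following the analysis of the rigidity matroid in \cite{jackson2005connected}. The inductive step consists of showing that each inverse move, applied to a framework already equipped with a maximal-rank stress, can be performed so as to preserve $3$-connectivity, redundant rigidity, and the existence of a maximal-rank stress.

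The main obstacle is exactly this last point. Edge addition is harmless, since enlarging the edge set can only enlarge the space of equilibrium stresses; the delicate case is the $1$-extension, where a degree-three vertex is introduced in place of a deleted edge and must be positioned generically so that a maximal-rank stress on the smaller framework extends to the larger one while the combinatorial conditions survive. Establishing that a suitable reduction always exists for a $3$-connected redundantly rigid graph, and that the stress extends across it, is the crux of the proof; the accompanying sparsity bookkeeping (the count $|E|\geq 2|V|-3$ together with the matroidal conditions that certify rigidity) is routine by comparison.
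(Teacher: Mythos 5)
The paper does not prove this theorem at all: it is imported verbatim from Jackson and Jord\'an \cite{jackson2005connected} and used as a black box (its only role in the paper is to certify that wheel graphs are globally rigid and that gluing globally rigid graphs along three common nodes preserves global rigidity). So there is no in-paper proof to compare yours against; your proposal must be judged as a reconstruction of the literature proof, and on that score it is a faithful roadmap rather than a proof. The necessity half is essentially right and is Hendrickson's argument: the reflection across the line through a $2$-cut, and the flex argument for redundant rigidity. One caveat: your step ``the length is nonconstant along the flex, hence some other configuration attains the same value $w(e)$'' hides the real content, since a nonconstant function along a path need not return to its initial value. The standard argument needs that the space of realizations of $G-e$ equivalent to $\rho$, taken modulo congruence, is a compact manifold of positive dimension, and that genericity makes $\rho$ a regular (non-extremal) point of the edge-length function of $e$, whence the value $w(e)$ is attained at a second, incongruent configuration.

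The sufficiency half, as you yourself acknowledge, is a plan whose two pillars are precisely the theorems one would have to prove: Connelly's stress-matrix sufficient condition for generic global rigidity, and the Berg--Jord\'an / Jackson--Jord\'an combinatorial result that every $3$-connected, redundantly rigid graph on at least five vertices admits a reduction (edge deletion or inverse $1$-extension) remaining in the class, together with the analytic fact that maximal-rank equilibrium stresses survive $1$-extensions at generic points. Nothing in your text establishes either ingredient; each is an independent paper's worth of work, and labelling them ``the crux'' does not discharge them. The honest summary is: your outline reproduces the correct architecture of the known proof of this theorem, but as a self-contained argument it has a genuine gap exactly where the difficulty lies --- which is presumably why the paper under review, whose contribution lies elsewhere (the communication-wheel construction and the localization algorithm), cites the result rather than proving it.
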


\begin{theorem}\cite{eren2004rigidity}\label{th: sufficient}
  If a network has at least 3 anchors and the underlying network graph is globally rigid, then it is uniquely localizable.
\end{theorem}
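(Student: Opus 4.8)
The plan is to reduce unique localizability to the definition of global rigidity given in Section~\ref{sec: rigid}, combined with the elementary fact that a planar isometry is completely determined by its action on three non-collinear points. Let $\rho$ denote the realization given by the true sensor positions, so that $(\mathcal{G}, \rho)$ is the actual framework; by hypothesis this framework is globally rigid, and at least three of its vertices are anchors with prescribed coordinates. Call a realization $\rho'$ a \emph{feasible solution} if it is consistent with the data, i.e., $d(\rho'(u), \rho'(v)) = w(u,v)$ for every edge $(u,v) \in \mathcal{E}$ and $\rho'(a) = \rho(a)$ for every anchor $a$. Existence of a solution is immediate since $\rho$ itself is feasible, so the real content is uniqueness: I would show that every feasible $\rho'$ coincides with $\rho$.

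First I would invoke global rigidity. Since $\rho'$ reproduces the edge weights, which are precisely the edge lengths of $(\mathcal{G}, \rho)$, the frameworks $(\mathcal{G}, \rho')$ and $(\mathcal{G}, \rho)$ are \emph{equivalent} in the sense of the definition. Global rigidity of $(\mathcal{G}, \rho)$ then forces them to be \emph{congruent}, so that there exists an isometry $T$ of $\mathbb{R}^2$ with $\rho' = T \circ \rho$.

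Next I would pin down $T$ using the anchors. Let $a_1, a_2, a_3$ be three anchors; by the general position assumption their images $\rho(a_1), \rho(a_2), \rho(a_3)$ are non-collinear. Because both realizations agree with the prescribed anchor coordinates, we have $\rho'(a_i) = \rho(a_i)$, and substituting $\rho' = T \circ \rho$ yields $T(\rho(a_i)) = \rho(a_i)$ for $i = 1, 2, 3$; that is, $T$ fixes three non-collinear points. The key step is to conclude from this that $T$ is the identity: an isometry fixing two distinct points $\rho(a_1), \rho(a_2)$ is either the identity or the reflection across the line through them, and the reflection is ruled out because it would send $\rho(a_3)$ to its mirror image, contradicting that $\rho(a_3)$ is a fixed point lying off that line. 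Hence $T = \mathrm{id}$, so $\rho' = \rho$, which establishes uniqueness and therefore unique localizability.

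The step I expect to require the most care is the classification of planar isometries underlying the final paragraph, and in particular the exclusion of the reflection. This is exactly where the general position hypothesis (no three nodes collinear, so in particular the three anchors are non-collinear) is indispensable, and it is the geometric reason why three anchors are needed rather than two; with only two fixed points one cannot distinguish a configuration from its mirror image. Everything else in the argument is a direct unwinding of the definitions of \emph{equivalent} and \emph{congruent} frameworks, so the proof hinges on correctly combining the global rigidity hypothesis with this uniqueness of the pinning isometry.
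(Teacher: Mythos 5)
The paper gives no proof of this theorem at all --- it is quoted directly from \cite{eren2004rigidity} --- so there is no internal argument to compare against. Your proposal reconstructs the standard proof of the cited result, and its core is sound: any feasible solution is \emph{equivalent} to the true framework, global rigidity upgrades equivalence to \emph{congruence}, and the congruence $T$ is then killed by the anchors. Your classification step is also handled correctly: a planar isometry fixing two distinct points is the identity or the reflection in the line through them, and the third, non-collinear anchor excludes the reflection. This is precisely why three non-collinear anchors (and not two) are required, as you note.

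There is, however, one step that is glossed over and that is the only real delicacy in the theorem: you write ``by hypothesis this framework is globally rigid,'' but the hypothesis is that the underlying \emph{graph} is globally rigid. With the conventions of Section~\ref{sec: rigid}, graph-level (global) rigidity is a \emph{generic} notion: it guarantees global rigidity only of generic realizations, i.e., those whose coordinates are algebraically independent over the rationals. The paper's general-position assumption (no three nodes collinear) is strictly weaker than genericity, and the implication you need genuinely fails without genericity: there exist generically globally rigid graphs, e.g.\ $K_{5,5}$, with general-position but non-generic realizations (points on a conic) admitting equivalent, non-congruent realizations. So to pass from ``the graph $\mathcal{G}$ is globally rigid'' to ``the framework $(\mathcal{G},\rho)$ is globally rigid'' you must either assume $\rho$ generic --- which is exactly how \cite{eren2004rigidity} states the theorem --- or say explicitly that you are interpreting the hypothesis at the framework level. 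With that caveat made explicit, the rest of your argument is a correct and complete unwinding of the definitions; without it, the first sentence of your second paragraph asserts something the stated hypothesis does not deliver.
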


The condition of having at least 3 anchors is also necessary for unique localizability in order to rule out the trivial transformations. Since we are considering anchor-free localization, some three mutually adjacent nodes of the network will play the role anchors by fixing their coordinates (respecting their mutual distances) in some virtual coordinate system. The remaining nodes of the network have to find their  position according to this coordinate system. It should be noted here that for networks that do not satisfy the condition that two nodes are adjacent if and only if they are within some fixed distance, the condition of having globally rigid underlying network graph is also necessary. In our model, where two nodes are adjacent if and only if the distance between them is at most $r$, the network can be uniquely localizable even if its underlying network graph is not globally rigid.

\section{Construction of a Globally Rigid Subgraph Using Communication Wheels}\label{constr}

In this section, we shall show that if the strong interior is connected, then the network has a globally rigid subgraph containing all strongly interior nodes, and all non-isolated weakly interior nodes. The proof is constructive and will lead to our localization algorithm presented in section \ref{sec_algo}. 


 We first present some results that will be frequently used in the paper. Lemmas \ref{eclipse_distance_relation}-\ref{npreq} follow from elementary geometric arguments.

\begin{lemma}\label{perimeter_2cover}
 Let $v_1$ be an interior node and $v_2 \in \mathcal{N}(v_1)$. Then
 \begin{enumerate}
  \item $CCW(v_1, v_2) \in \mathcal{Z}(v_3)$ for some $v_3 \in \mathcal{N}(v_1) \setminus \{v_2\}$, such that $CCW(v_1, v_3) \notin \mathcal{Z}(v_2)$,
  
  \item $CW(v_1, v_2) \in \mathcal{Z}(v_4)$ for some $v_4 \in \mathcal{N}(v_1) \setminus \{v_2\}$, such that $CW(v_1, v_4) \notin \mathcal{Z}(v_2)$
 \end{enumerate}

\end{lemma}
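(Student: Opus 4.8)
The plan is to read everything on $\partial(\mathcal{Z}(v_1))$ through the circular arcs cut out by the neighbours of $v_1$, and to exploit the interior hypothesis as a covering statement. First I would record the basic geometry: for a neighbour $u\in\mathcal{N}(v_1)$ the set $\mathcal{Z}(u)\cap\partial(\mathcal{Z}(v_1))$ is a closed arc symmetric about the ray $v_1u$, whose two endpoints are precisely $CW(v_1,u)$ and $CCW(v_1,u)$, and (by the convention fixed above) traversing clockwise from $CCW(v_1,u)$ to $CW(v_1,u)$ sweeps this arc, of angular width $<\pi$. The one fact I must isolate is that a node $w\neq v_1$ can contain a point of $\partial(\mathcal{Z}(v_1))$ only when $w\in\mathcal{N}(v_1)$, since such a $w$ is forced to lie within distance $r$ of $v_1$; this is the elementary geometric ingredient promised just before the lemma. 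Consequently the interior hypothesis on $v_1$ says exactly that the circle $\partial(\mathcal{Z}(v_1))$ is covered by the arcs $\{\,\mathcal{Z}(u)\cap\partial(\mathcal{Z}(v_1)) : u\in\mathcal{N}(v_1)\,\}$.

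For the first statement, set $z=CCW(v_1,v_2)$, the counterclockwise endpoint of the arc of $v_2$. I would slide a test point $z_\varepsilon\in\partial(\mathcal{Z}(v_1))$ a small angle $\varepsilon>0$ counterclockwise past $z$. Because $z$ is the counterclockwise end of the closed arc cut out by $v_2$, for all small $\varepsilon$ the point $z_\varepsilon$ leaves $\mathcal{Z}(v_2)$; yet by the covering statement $z_\varepsilon\in\mathcal{Z}(w)$ for some $w\in\mathcal{N}(v_1)$, and necessarily $w\neq v_2$. Since $\mathcal{N}(v_1)$ is finite, one neighbour $v_3\neq v_2$ serves a whole sequence $\varepsilon_k\downarrow 0$; as $\mathcal{Z}(v_3)$ is closed and $z_{\varepsilon_k}\to z$, we get $z\in\mathcal{Z}(v_3)$, which is the containment asserted in part 1.

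It remains to check $CCW(v_1,v_3)\notin\mathcal{Z}(v_2)$, and here the point of looking just counterclockwise of $z$ (rather than at $z$ itself) becomes clear: the arc of $v_3$ contains points strictly counterclockwise of $z$, so its own counterclockwise endpoint $CCW(v_1,v_3)$ lies strictly counterclockwise of $z$. As $z$ already marks the counterclockwise end of $v_2$'s arc and every such arc spans less than $\pi$, $CCW(v_1,v_3)$ cannot wrap the circle back into the arc of $v_2$, so it is outside $\mathcal{Z}(v_2)$, as required. The second statement is the mirror image: take $z=CW(v_1,v_2)$ and slide clockwise, producing $v_4\in\mathcal{N}(v_1)\setminus\{v_2\}$ with the symmetric properties.

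The delicate steps, rather than any computation, are two. The hard part will be pinning down the geometric fact that only neighbours of $v_1$ can meet $\partial(\mathcal{Z}(v_1))$, together with the exact $CW/CCW$ labelling of the arc endpoints, since the whole conclusion ``$v_3\in\mathcal{N}(v_1)$'' rests on it. The second care-point is that merely covering $z$ does not suffice --- a neighbour could cover $z$ with an arc lying entirely clockwise of it, violating the $CCW(v_1,v_3)\notin\mathcal{Z}(v_2)$ clause --- so the limiting argument must be run on points strictly counterclockwise of $z$ to guarantee that the arc of $v_3$ genuinely protrudes past that of $v_2$, and the final non-wraparound observation uses the $<\pi$ arc width in an essential way.
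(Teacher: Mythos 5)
Your proof is correct, and it reaches the paper's conclusion by a somewhat different mechanism. The paper argues by contradiction: assuming no valid $v_3$ exists, it partitions $\mathcal{N}(v_1)$ into the set $A$ of neighbours covering $P = CCW(v_1,v_2)$ and the set $B$ of those not covering $P$ (nonempty because of the point antipodal to $P$), and then exhibits an uncovered angular gap $(0,\theta)$ just counterclockwise of $P$, where $\theta$ is the minimum angle of the clockwise arc-endpoints over $B$ --- contradicting interiority. You instead argue directly: points at angle $\varepsilon$ counterclockwise of $P$ must be covered by neighbours other than $v_2$; pigeonhole over the finite set $\mathcal{N}(v_1)$ yields a single $v_3$ covering a sequence $\varepsilon_k \downarrow 0$, closedness of $\mathcal{Z}(v_3)$ gives $P \in \mathcal{Z}(v_3)$, and since $v_3$'s arc protrudes strictly counterclockwise of $P$ while all arcs have width $< \pi$, its endpoint $CCW(v_1,v_3)$ cannot lie in $v_2$'s arc. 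Both proofs rest on the same two ingredients --- arcs of angular width $< \pi$ and coverage of $\partial(\mathcal{Z}(v_1))$ by neighbours only --- but yours dispenses with the contradiction and the antipodal-point step. A further point in your favour: you explicitly isolate and justify the fact that any node covering a point of $\partial(\mathcal{Z}(v_1))$ is necessarily a neighbour of $v_1$ (this is where the zone radius matters: with $\mathcal{Z}(\cdot)$ of radius $r/2$, as the paper's figures and Lemmas \ref{r_neighbor_is_adj_to_r_neighbor} and \ref{1_implies_eclipsed} require, the triangle inequality gives distance at most $r$). The paper uses this silently, both when it restricts its partition to $\mathcal{N}(v_1)$ even though the definition of interior node quantifies over all of $\mathcal{V} \setminus \{v_1\}$, and when it places the node covering the antipodal point into $B \subseteq \mathcal{N}(v_1)$; your treatment makes the lemma's dependence on this convention visible.
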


\begin{proof}
 It is sufficient to prove only the first part. We shall prove by contradiction. So, assume that there is no such node in $\mathcal{N}(v_1) \setminus \{v_2\}$. Let $P = CCW(v_1,v_2)$. Let us partition the set of neighbors of $v_1$ into two sets as: $A = \{v \in \mathcal{N}(v_1) \mid P \in \mathcal{Z}(v) \}$ and $B = \{v \in \mathcal{N}(v_1) \mid P \notin \mathcal{Z}(v) \}$. $A \neq \emptyset$, since $v_2 \in A$. $B \neq \emptyset$, because the diametrically opposite point of $P$ on $\partial(\mathcal{Z}(v_1))$ must be covered by some node which does not cover $P$.
 
  Fix the ray $\overrightarrow{v_1 P}$ as a reference axis. Now for each $v \in \mathcal{N}(v_1)$, shoot rays from $v_1$ passing through $CCW(v_1, v)$ for $v \in A$ and $CW(v_1, v)$ for $v \in B$. For each $v \in \mathcal{N}(v_1)$, let $\theta_v$ be the angle formed by the corresponding ray measured counterclockwise from the reference axis $\overrightarrow{v_1 P}$. Let $\theta = min\{\theta_v \mid v \in B\}$. We must have $\theta > 0$, since for any $v \in B$, $\theta_v > 0$. Also, it implies from our hypothesis that $max\{\theta_v \mid v \in A\} = 0$. Then clearly any point on $\partial(\mathcal{Z}(v_1))$ making an angle in between $(0, \theta)$ with the ray $\overrightarrow{v_1 P}$ is not covered by any neighbor of $v_1$ (See Fig. \ref{2cov_fig}). This contradicts the fact that $v_1$ is an interior node. 
\end{proof}

\begin{figure}[htb!]
\centering
\fontsize{10pt}{10pt}\selectfont
\def\svgwidth{0.5\textwidth}
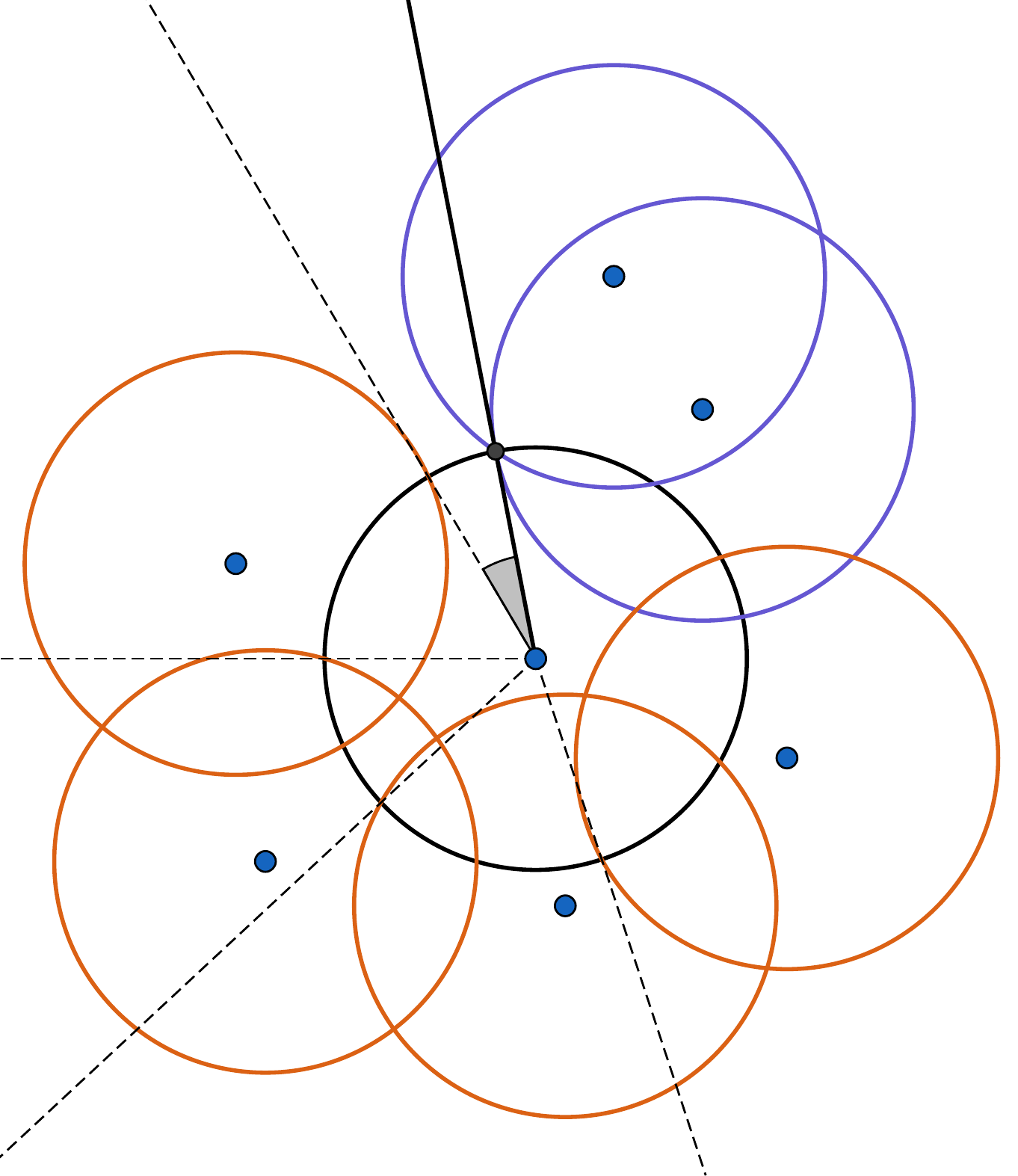
\caption{Illustration of the constructions in the proof of Lemma \ref{perimeter_2cover}. The purple and the orange circles correspond to the boundaries of the communication zones of the nodes in the set $A$ and $B$ respectively.}
\label{2cov_fig}
\end{figure}

\begin{lemma}\label{eclipse_distance_relation}
 If $u$ and $u'$ are two distinct neighbors of $v \in \mathcal{V}$ such that $u \preceq_v u'$, then $d(u, v) > d(u', v)$.
\end{lemma}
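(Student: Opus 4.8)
The plan is to translate the combinatorial containment hypothesis into a comparison of circular arcs, and then into a comparison of distances via a monotonicity argument. Place $v$ at the origin, so that $\partial(\mathcal{Z}(v))$ is the circle of radius $r$ parametrized by $\phi \mapsto r(\cos\phi, \sin\phi)$. For a neighbor $w \in \mathcal{N}(v)$, write $d_w = d(v,w)$ and let $\alpha_w$ denote the angular direction of $w$ as seen from $v$ (well defined since $w \neq v$). A short computation with the law of cosines shows that a boundary point at angle $\phi$ lies in $\mathcal{Z}(w)$ exactly when $\cos(\phi - \alpha_w) \ge \tfrac{d_w}{2r}$. Hence $\mathcal{Z}(w) \cap \partial(\mathcal{Z}(v))$ is precisely the arc centered at direction $\alpha_w$ with angular half-width
\[
\beta(d_w) = \arccos\!\Big(\frac{d_w}{2r}\Big).
\]

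First I would record two elementary properties of $\beta$ on the relevant range $d_w \in (0,r]$: it is strictly decreasing, and it satisfies $\beta(d_w) < \tfrac{\pi}{2}$ (equality holding only in the degenerate case $d_w = 0$, which is excluded). The second property guarantees that the covered set is a single connected arc spanning strictly less than a semicircle, so that containment between two such arcs behaves as on an interval, free of any wrap-around subtlety.

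Next I would invoke the hypothesis $u \preceq_v u'$, which by definition asserts $\mathcal{Z}(u) \cap \partial(\mathcal{Z}(v)) \subseteq \mathcal{Z}(u') \cap \partial(\mathcal{Z}(v))$. Containment of one arc in another immediately yields the arc-length inequality $\beta(d_u) \le \beta(d_{u'})$, and strict monotonicity of $\beta$ then gives $d_u \ge d_{u'}$. To upgrade this to a strict inequality, I would rule out equality: if $\beta(d_u) = \beta(d_{u'})$, the two arcs would have equal length, and containment of two equal-length arcs of measure $< 2\pi$ forces them to coincide, whence $\alpha_u = \alpha_{u'}$ and $d_u = d_{u'}$, i.e.\ $u = u'$, contradicting distinctness. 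Therefore $\beta(d_u) < \beta(d_{u'})$, and monotonicity of $\beta$ delivers $d(u,v) = d_u > d_{u'} = d(u',v)$.

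I expect the main obstacle to be purely one of careful bookkeeping rather than conceptual difficulty: establishing the exact arc description cleanly (the law-of-cosines reduction to $\cos(\phi - \alpha_w) \ge \tfrac{d_w}{2r}$) and then arguing the \emph{strict} form of the containment-to-length comparison, using the sub-semicircular bound to ensure the arcs are connected and to eliminate any ambiguity about how they wrap around $\partial(\mathcal{Z}(v))$.
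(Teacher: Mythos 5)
Your proof is correct, and it supplies in full the ``elementary geometric argument'' that the paper merely asserts without proof for Lemmas \ref{eclipse_distance_relation}--\ref{npreq}. The key reduction --- that $\mathcal{Z}(w) \cap \partial(\mathcal{Z}(v))$ is the arc $\{\phi : \cos(\phi - \alpha_w) \ge d_w/2r\}$ of half-width $\arccos(d_w/2r) < \pi/2$, so that containment of arcs translates via strict monotonicity of $\arccos$ into the reversed strict inequality of distances (with equality excluded because coinciding arcs would force $u$ and $u'$ to occupy the same position) --- is exactly the kind of computation the authors had in mind, and every step checks out.
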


\begin{lemma}\label{eclipse_dual_relation}
 For distinct $v, u, u' \in \mathcal{V}$, $u \preceq_v u' \Leftrightarrow v \preceq_u u'$ .
\end{lemma}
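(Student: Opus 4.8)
The plan is to show that both sides of the claimed equivalence reduce to one and the same symmetric condition: that the two boundary intersections $CW(v,u)$ and $CCW(v,u)$ both lie in $\mathcal{Z}(u')$. The starting observation is that every communication zone is a disk of the common radius $r$, so for adjacent $v$ and $u$ the set $\mathcal{Z}(u)\cap\partial(\mathcal{Z}(v))$ is precisely the arc of $\partial(\mathcal{Z}(v))$ cut off by the two points where the circles $\partial(\mathcal{Z}(v))$ and $\partial(\mathcal{Z}(u))$ cross, and these are by definition $CW(v,u)$ and $CCW(v,u)$. The essential point is that this pair of crossing points is shared: the dual arc $\mathcal{Z}(v)\cap\partial(\mathcal{Z}(u))$ living on the other circle has exactly the same two endpoints.

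First I would record an auxiliary arc fact: for any $u\in\mathcal{N}(v)$ the arc $\mathcal{Z}(u)\cap\partial(\mathcal{Z}(v))$ has angular length strictly less than $\pi$. Indeed, measuring the angle $\theta$ at $v$ from the direction $\overrightarrow{vu}$, a point of $\partial(\mathcal{Z}(v))$ lies in $\mathcal{Z}(u)$ iff $\cos\theta \ge d(u,v)/(2r)$, so the arc is centered on $\overrightarrow{vu}$ with half-angle $\arccos(d(u,v)/(2r)) < \pi/2$. I would then prove a short-arc containment lemma: if $A$ and $B$ are arcs of a circle, each of length $<\pi$, and both endpoints of $A$ lie in $B$, then $A\subseteq B$. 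The two endpoints of $A$ split the circle into $A$ and a complementary arc of length $>\pi$; a connected arc $B$ passing through both endpoints must contain one of these two, and being too short to contain the complement, it must contain $A$.

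Combining these, $u\preceq_v u'$, i.e. $\mathcal{Z}(u)\cap\partial(\mathcal{Z}(v))\subseteq\mathcal{Z}(u')\cap\partial(\mathcal{Z}(v))$, is equivalent to $CW(v,u),CCW(v,u)\in\mathcal{Z}(u')$: the forward implication is immediate from containment, while the converse is exactly the short-arc lemma applied to the two arcs that $u$ and $u'$ cut out of $\partial(\mathcal{Z}(v))$, both of length $<\pi$. Because the crossing points of $\partial(\mathcal{Z}(v))$ and $\partial(\mathcal{Z}(u))$ are shared, the identical reduction carried out on the circle $\partial(\mathcal{Z}(u))$ shows that $v\preceq_u u'$ is equivalent to the very same condition $CW(v,u),CCW(v,u)\in\mathcal{Z}(u')$. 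Hence both relations collapse to a single condition and are equivalent. Along the way I would verify well-definedness: summing the inequalities $d(CW(v,u),u')\le r$ and $d(CCW(v,u),u')\le r$ forces both $d(v,u')\le d(u,v)\le r$ and $d(u,u')\le r$, so that $u'\in\mathcal{N}(v)$ and $u'\in\mathcal{N}(u)$ hold whenever the common condition does, making both sides of the equivalence meaningful.

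The main obstacle is the short-arc containment lemma together with the verification that the two arcs involved genuinely have length $<\pi$; this is exactly where the equal-radius assumption is indispensable, since for circles of unequal radii the near arc could exceed $\pi$ and endpoint-containment would no longer force arc-containment. The rest is bookkeeping: confirming that the endpoints of the arc on $\partial(\mathcal{Z}(v))$ and of the dual arc on $\partial(\mathcal{Z}(u))$ are the same pair $CW(v,u),CCW(v,u)$, so that the two reductions truly meet at one condition.
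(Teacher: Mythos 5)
Your proof is correct. Note that the paper itself gives no argument for this lemma at all: it is one of the statements (Lemmas \ref{eclipse_distance_relation}--\ref{npreq}) that the authors simply assert ``follow from elementary geometric arguments,'' so there is no official proof to compare against; yours is a valid way to fill that gap. The structure is sound at every step: the arc $\mathcal{Z}(u)\cap\partial(\mathcal{Z}(v))$ is indeed centered on $\overrightarrow{vu}$ with half-angle $\arccos\bigl(d(u,v)/(2r)\bigr)<\pi/2$ (this is where equal radii and $d(u,v)>0$ enter); the short-arc containment lemma is correct, since a connected subset of a circle containing both endpoints of $A$ must contain $A$ or its complement, and a set of length $<\pi$ cannot contain the complement; and the two circles $\partial(\mathcal{Z}(v))$, $\partial(\mathcal{Z}(u))$ sharing the pair $\{CW(v,u),CCW(v,u)\}$ makes both $u\preceq_v u'$ and $v\preceq_u u'$ collapse to the single symmetric condition that this pair lies in $\mathcal{Z}(u')$. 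Your well-definedness check also holds: the median formula gives $d(u',M)\le d(u,v)/2$ for $M$ the common midpoint of $uv$ and of the chord, whence $d(u',v)\le d(u,v)\le r$ and $d(u',u)\le d(u,v)\le r$ by the triangle inequality, so $u'$ is a neighbor of both $v$ and $u$ whenever the symmetric condition holds. A pleasant by-product of your reformulation, beyond what the paper's terse remark conveys, is that it makes the duality self-evident (the condition is literally symmetric in $v$ and $u$) and isolates exactly where the unit-disk hypothesis is indispensable, as your counterexample remark about unequal radii shows.
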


\begin{lemma}\label{maximal_dual_relation}
 For distinct $v, u \in \mathcal{V}$, $u$ is a maximal neighbor of $v$ if and only if $v$ is a maximal neighbor of $u$.
\end{lemma}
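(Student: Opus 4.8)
The plan is to deduce this from the dual relation of Lemma \ref{eclipse_dual_relation}. Since the assertion is symmetric in $u$ and $v$, it suffices to prove one implication, and I will prove its contrapositive: if $v$ is \emph{not} a maximal neighbor of $u$, then $u$ is not a maximal neighbor of $v$. So suppose there is some $w \in \mathcal{N}(u) \setminus \{v\}$ with $v \preceq_u w$. Applying Lemma \ref{eclipse_dual_relation} to the triple $u, v, w$ (with $u$ as the centre) immediately gives $u \preceq_v w$. If I can argue that $w \in \mathcal{N}(v) \setminus \{u\}$, then $u \preceq_v w$ witnesses that $u$ is not maximal in $\mathcal{N}(v)$, which is exactly the desired conclusion; the converse implication then follows by relabelling $u$ and $v$.

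The only genuine gap is this adjacency claim: I must show that $v \preceq_u w$ forces $d(v,w) \le r$, i.e. $w \in \mathcal{N}(v)$. (That $w \ne u$ is automatic since $w \in \mathcal{N}(u)$, and $w \ne v$ by assumption.) I expect this to be the main obstacle, as everything else is a one-line application of the dual relation.

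To establish the adjacency, I place $u$ at the origin and describe the covered arcs on $\partial(\mathcal{Z}(u))$ by their angular data. For a neighbor $x$ of $u$ at distance $\delta = d(u,x) \le r$ in direction $\alpha_x$, a short computation shows that $\mathcal{Z}(x) \cap \partial(\mathcal{Z}(u))$ is the arc of half-angle $\beta_x = \arccos\frac{\delta}{2r}$ centred at $\alpha_x$. Writing $a = d(u,v)$ and $b = d(u,w)$, the hypothesis $v \preceq_u w$ says that the arc of $v$ is contained in the arc of $w$, which forces $\beta_w \ge \beta_v$ and $|\alpha_v - \alpha_w| \le \beta_w - \beta_v$; moreover $a > b$ by Lemma \ref{eclipse_distance_relation}. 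The crucial observation is that each $\delta \le r$ gives $\frac{\delta}{2r} \le \frac12$, so every half-angle lies in $[\frac{\pi}{3}, \frac{\pi}{2}]$ and hence $0 \le \beta_w - \beta_v \le \frac{\pi}{6}$.

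Finally I combine these by the law of cosines. Since $|\alpha_v - \alpha_w| \le \beta_w - \beta_v$ and cosine is decreasing on $[0,\pi]$,
\begin{equation*}
d(v,w)^2 = a^2 + b^2 - 2ab\cos(\alpha_v - \alpha_w) \le a^2 + b^2 - 2ab\cos(\beta_w - \beta_v).
\end{equation*}
Substituting $a = 2r\cos\beta_v$ and $b = 2r\cos\beta_w$ and using the elementary identity $\cos^2 p + \cos^2 q - 2\cos p\cos q\cos(q-p) = \sin^2(q-p)$, the right-hand side collapses to $4r^2\sin^2(\beta_w - \beta_v)$. As $\beta_w - \beta_v \le \frac{\pi}{6}$ yields $\sin(\beta_w - \beta_v) \le \frac12$, I obtain $d(v,w)^2 \le 4r^2 \cdot \frac14 = r^2$, so $w \in \mathcal{N}(v)$ and the argument closes.
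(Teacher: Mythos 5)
Your proof is correct, and there is nothing in the paper to compare it against line by line: the paper states Lemma \ref{maximal_dual_relation} without proof, grouping it with Lemmas \ref{eclipse_distance_relation}--\ref{npreq} as ``following from elementary geometric arguments.'' Your reduction via Lemma \ref{eclipse_dual_relation} is surely the intended route, but you correctly isolate the one point that this hand-wave hides: since $\preceq_v$ and maximality are defined only on $\mathcal{N}(v)$, the witness $w \in \mathcal{N}(u)\setminus\{v\}$ with $v \preceq_u w$ defeats maximality of $u$ among the neighbors of $v$ only if $w$ is itself adjacent to $v$, and this is not automatic --- nonemptiness of the arc $\mathcal{Z}(v)\cap\partial(\mathcal{Z}(u))$ inside $\mathcal{Z}(w)$ gives only $d(v,w)\le 2r$ by the triangle inequality. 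Your trigonometric closing of this gap checks out: for a neighbor $x$ of $u$ the covered arc has half-angle $\beta_x = \arccos\frac{d(u,x)}{2r} \in [\frac{\pi}{3},\frac{\pi}{2}]$ precisely because of the unit-disk hypothesis $d(u,x)\le r$; arc containment then forces $\beta_v \le \beta_w$ and $|\alpha_v-\alpha_w| \le \beta_w-\beta_v \le \frac{\pi}{6}$; the identity $\cos^2 p + \cos^2 q - 2\cos p\cos q\cos(q-p) = \sin^2(q-p)$ is valid (expand $\cos(q-p)$ and regroup into $(\cos p\sin q - \cos q\sin p)^2$); and so $d(v,w)^2 \le 4r^2\sin^2(\beta_w-\beta_v) \le r^2$. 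Relative to the paper, your writeup is strictly more informative: it supplies the elementary geometry the authors only allude to, and it makes visible that the lemma genuinely uses the uniform communication range --- the $[\frac{\pi}{3},\frac{\pi}{2}]$ half-angle bound is exactly the feature of unit disk graphs that makes the dominating node $w$ land within range of $v$.
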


\begin{lemma}\label{npreq}
 For distinct $v, u, u' \in \mathcal{V}$, $u \preceq_v u' \Rightarrow u \npreceq_{u'} v$ .
\end{lemma}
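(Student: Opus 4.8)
The plan is to argue by contradiction, reducing the statement to a clash between two distance inequalities. The key observation is that both the hypothesis $u \preceq_v u'$ and the negation of the conclusion, $u \preceq_{u'} v$, can be rewritten, via the duality of $\preceq$ (Lemma \ref{eclipse_dual_relation}) followed by the distance comparison (Lemma \ref{eclipse_distance_relation}), into comparisons between the \emph{same} pair of distances $d(u,v)$ and $d(u,u')$, but pointing in opposite directions. Once this is noticed, the two supporting lemmas do all the work and no fresh geometric argument is needed.

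First I would unpack the hypothesis. From $u \preceq_v u'$, Lemma \ref{eclipse_dual_relation} gives $v \preceq_u u'$ (swap the subscript $v$ with the first argument $u$, keeping $u'$ fixed as the larger element). Applying Lemma \ref{eclipse_distance_relation} to $v \preceq_u u'$ then yields $d(u,v) > d(u,u')$.

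Next I would assume, for contradiction, that the conclusion fails, i.e. $u \preceq_{u'} v$. By the same duality (Lemma \ref{eclipse_dual_relation}) this is equivalent to $u' \preceq_u v$, and Lemma \ref{eclipse_distance_relation} applied to $u' \preceq_u v$ gives $d(u,u') > d(u,v)$. This contradicts the inequality derived in the previous step, so the assumption $u \preceq_{u'} v$ is untenable and we conclude $u \npreceq_{u'} v$.

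I expect no substantive geometric obstacle, since Lemmas \ref{eclipse_dual_relation} and \ref{eclipse_distance_relation} already encode the relevant facts about the order $\preceq$. The only thing requiring care is the bookkeeping when invoking these lemmas: correctly tracking which vertex serves as the subscript (the center whose boundary is being covered) and which two vertices are being compared, so that both chains genuinely terminate in a comparison of the identical pair $\{d(u,v),\,d(u,u')\}$. Verifying that this matching occurs is the crux that forces the contradiction.
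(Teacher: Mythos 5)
Your proof is correct. Note that the paper itself offers no argument for this lemma: it is bundled into the remark that Lemmas \ref{eclipse_distance_relation}--\ref{npreq} ``follow from elementary geometric arguments,'' so the implied route is a direct geometric one. Your route is genuinely different and arguably cleaner: you derive the statement purely formally from the two preceding lemmas, with no new geometry at all. Concretely, the bookkeeping checks out: from $u \preceq_v u'$, Lemma \ref{eclipse_dual_relation} gives $v \preceq_u u'$, and Lemma \ref{eclipse_distance_relation} (applied with $u$ as the center and $v, u'$ as its two distinct neighbors) gives $d(u,v) > d(u,u')$; assuming $u \preceq_{u'} v$, the same duality gives $u' \preceq_u v$, whence $d(u,u') > d(u,v)$, and the two strict inequalities clash. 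One implicit point worth acknowledging is that invoking Lemma \ref{eclipse_distance_relation} on $v \preceq_u u'$ presupposes $u' \in \mathcal{N}(u)$ (the relation $\preceq_u$ is only defined on $\mathcal{N}(u)$); this is fine because the duality lemma, read as the paper states it, asserts that the relation $v \preceq_u u'$ genuinely holds, which entails that its arguments lie in $\mathcal{N}(u)$. What your approach buys is that Lemma \ref{npreq} becomes a consequence of the other two ``elementary'' lemmas rather than an independent geometric fact, shrinking the set of statements that actually require geometric verification; the cost is that your proof inherits whatever burden of proof those two lemmas carry, but since the paper already takes them as established, that is no loss here.
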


A \emph{wheel graph} \cite{Tutte01} of order $n$ or simply an \emph{$n$-wheel}, $n \geq 3$, is a simple graph which consists of cycle of order $n$ and another vertex called the \emph{hub} such that every vertex of the cycle is connected to the hub. The vertices on the cycle are called the \emph{rim vertices}. An edge joining a rim vertex and the hub is called a \emph{spoke}, and an edge joining two consecutive rim vertices is called a \emph{rim edge}. By Theorem \ref{jack}, it follows that a wheel is globally rigid. 

The most crucial part of our algorithm is the construction of a special structure called the \emph{communication wheel}. The definition of communication wheel closely resembles to that of \emph{sensing wheel} used in \cite{Sau09}, where the authors devised a wheel based centralized sequential localization algorithm for a restricted class of sensing covered networks over a convex region.

\textbf{Communication wheel:} For any interior node $v \in \mathcal{V}$, we define a \emph{communication wheel} of $v$ as a subgraph $W$ of $\mathcal{G}$ such that 
\begin{enumerate}
 \item $W$ is a wheel graph with $v$ as the hub and the rim nodes $\{v_1, \ldots, v_m\}$ being maximal neighbors of $v$
 
 
 \item $CCW(v, v_i) \in \mathcal{Z}(v_{i+1})$ and $CW(v, v_i) \in \mathcal{Z}(v_{i-1})$, for $i = 1, \ldots, m$, where $v_{m+1}$ means $v_1$ and $v_0$ means $v_m$.
\end{enumerate}

For a rim node $v'$ of a communication wheel $W$ of $v$, we can denote the two neighboring rim nodes of $v'$ as $CCW_{W}(v')$ and $CW_{W}(v')$ so that $CCW(v, v') \in \mathcal{Z}(CCW_{W}(v'))$ and $CW(v, v') \in \mathcal{Z}(CW_{W}(v'))$.


\begin{figure}[htb!]
\centering
\fontsize{10pt}{10pt}\selectfont
\def\svgwidth{0.65\textwidth}
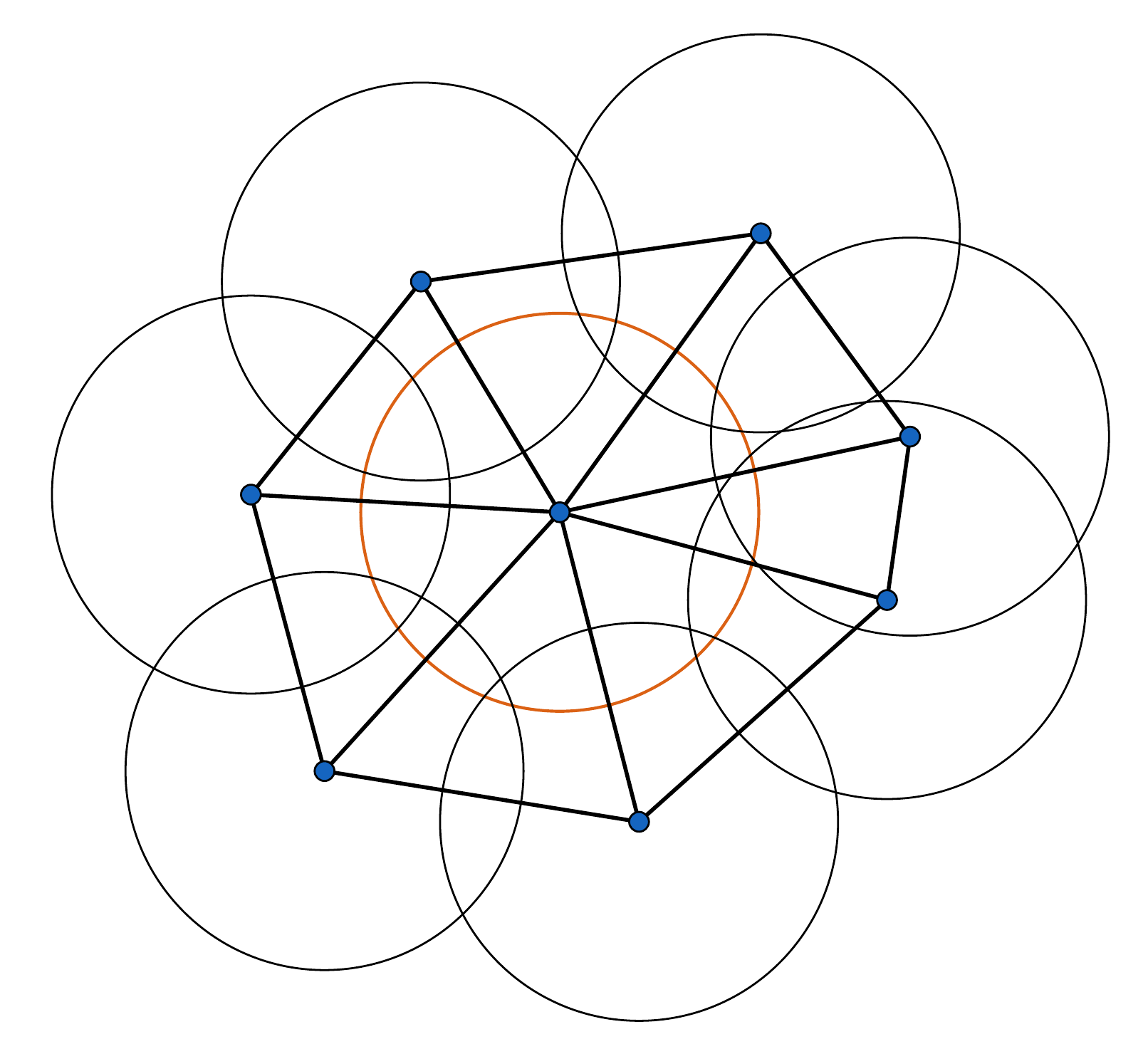
\caption{A communication wheel of $v$ with rim nodes $v_1, v_2, v_3, v_4, v_5, v_6$ and $v_7$.}
\label{communication_wheel_fig}
\end{figure}

%

\begin{lemma}\label{lemma: wheel cover}
 If $W$ is a communication wheel of $v$, then $\partial(\mathcal{Z}(v)) \subset \bigcup\limits_{u \in \mathcal{V}(W) \setminus \{v\}} \mathcal{Z}(u)$.
\end{lemma}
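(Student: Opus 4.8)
The plan is to show that each rim node $v_i$ covers a definite arc of $\partial(\mathcal{Z}(v))$, and that condition 2 in the definition of a communication wheel forces these arcs to chain around the circle with no gap. First I would fix the elementary geometric picture. Since $\mathcal{Z}(v_i)$ is the closed disk of radius $r$ about $v_i$ and $0 < d(v,v_i) \le r$, the set $A_i := \mathcal{Z}(v_i) \cap \partial(\mathcal{Z}(v))$ is a closed arc whose two endpoints are exactly the boundary intersections $CW(v,v_i)$ and $CCW(v,v_i)$. Placing $v$ at the origin and $v_i$ at distance $d_i$ on an axis, a one-line computation shows this arc subtends the angle $2\arccos(d_i/2r) \in [2\pi/3, \pi)$ at $v$, so it is a minor arc of angular width strictly less than $\pi$; this is consistent with the labelling convention for the boundary intersections. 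Orienting $\partial(\mathcal{Z}(v))$ counterclockwise, that convention then tells me $A_i$ runs from $CW(v,v_i)$ to $CCW(v,v_i)$, so I write $\ell_i$ and $r_i$ for the angular coordinates of these two endpoints, giving $A_i = [\ell_i, r_i]$ with $r_i - \ell_i < \pi$.

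Next I would translate condition 2 into an overlap-and-advance statement. Condition 2 reads $CCW(v,v_i) \in \mathcal{Z}(v_{i+1})$ and $CW(v,v_i) \in \mathcal{Z}(v_{i-1})$, that is, $r_i \in A_{i+1}$ and $\ell_i \in A_{i-1}$. To avoid the ambiguity of arithmetic modulo $2\pi$, I would pass to the universal cover and lift the arcs to the real line, choosing the lift of $A_{i+1}$ to be the one containing the point $r_i$. Reading off the two membership relations then yields $\ell_i \le \ell_{i+1} \le r_i \le r_{i+1}$ for every $i$: consecutive arcs overlap in $[\ell_{i+1}, r_i]$, and both endpoints advance counterclockwise. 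A straightforward induction on this chain gives $\bigcup_{j=1}^{i} A_j = [\ell_1, r_i]$, a single arc, for each $i$.

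Finally I would close the loop using the cyclic index convention $v_{m+1} = v_1$. Condition 2 applied at $i = m$ gives $CCW(v,v_m) \in \mathcal{Z}(v_1)$, so $r_m$ lies in some lift of $A_1$. Because the right endpoints satisfy $r_1 \le \cdots \le r_m$ and, by the dual inequality from the second half of condition 2, the left endpoints satisfy $\ell_1 \le \cdots \le \ell_m$, the relevant lift must be $[\ell_1 + 2\pi, r_1 + 2\pi]$ rather than $A_1$ itself, forcing $r_m \ge \ell_1 + 2\pi$. Consequently $\bigcup_{j=1}^{m} A_j = [\ell_1, r_m] \supseteq [\ell_1, \ell_1 + 2\pi]$, which projects onto the entire circle $\partial(\mathcal{Z}(v))$, establishing the lemma.

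The main obstacle is the cyclic bookkeeping in the last two steps: one must pick the lifts of the arcs consistently so that all the chaining inequalities hold simultaneously, and then rule out the degenerate possibility that the arcs fail to wind once around $v$ (the case in which $r_m$ returns to $A_1$ without completing a full turn). Here the two halves of condition 2 work together — the $r$-chain and the $\ell$-chain both being monotone, together with the wraparound at $i=m$ and $i=1$, leave only the alternatives that the arcs wind around at least once or that all $m \ge 3$ arcs coincide; the latter is impossible since the rim nodes are distinct and in general position. Once the orientation convention has been pinned down, the remaining ingredients — the arc computation and the conversion of condition 2 into endpoint inequalities — are routine.
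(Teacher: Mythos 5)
Your proof is correct, and it takes what is in essence the only available approach: condition 2 of the definition chains the covered arcs $\mathcal{Z}(v_i) \cap \partial(\mathcal{Z}(v))$ around the circle, which is exactly what you formalize with the lift bookkeeping and the monotone endpoint inequalities. For comparison, the paper's own proof is the single line ``Follows immediately from the definition of communication wheel,'' so your write-up simply supplies the details the paper omits --- including the one genuinely non-immediate point, namely ruling out the degenerate non-winding case, which is most cleanly dispatched by noting that coinciding (or nested) arcs would make one rim node $\preceq_v$-dominated by another, contradicting the maximality of rim nodes required by condition 1 (distinctness of the rim nodes also suffices via your reflection observation; general position is not actually needed).
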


\begin{proof}
 Follows immediately from the definition of communication wheel.
\end{proof}

\begin{theorem}\label{communication_wheel_from_non_eclipsed}
 If $v \in \mathcal{V}$ is an interior node and $v_1$ a maximal neighbor of $v$, then $v$ has a communication wheel $W$ having $v_1$ as a rim node. 
\end{theorem}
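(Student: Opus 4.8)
The plan is to build the rim of the wheel incrementally, sweeping counterclockwise around $\partial(\mathcal{Z}(v))$ and using the boundary arcs that the neighbours of $v$ cut out. For a neighbour $u$ of $v$, the set $\mathcal{Z}(u) \cap \partial(\mathcal{Z}(v))$ is a closed arc whose endpoints are exactly $CW(v,u)$ and $CCW(v,u)$ and whose angular width about $v$ is strictly less than $\pi$. Since $v$ is interior, the arcs of its neighbours cover $\partial(\mathcal{Z}(v))$, and because every neighbour is $\preceq_v$-dominated by a maximal one whose arc is at least as large (by the very definition of $\preceq_v$), the arcs of the maximal neighbours already cover $\partial(\mathcal{Z}(v))$. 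As each arc subtends an angle $< \pi$, at least three are needed, so the rim I produce will have $m \geq 3$ nodes, as a wheel requires.

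Starting from the given maximal neighbour $v_1$, I would generate $v_2, v_3, \dots$ as follows. Given the current rim node $v_i$, apply the first part of Lemma \ref{perimeter_2cover} to the pair $(v, v_i)$ to obtain a neighbour that covers $CCW(v, v_i)$ and satisfies $CCW(v, \cdot) \notin \mathcal{Z}(v_i)$, i.e.\ a neighbour whose arc extends that of $v_i$ strictly counterclockwise. Replacing it by a $\preceq_v$-maximal neighbour dominating it preserves both properties, since domination only enlarges the covered arc; thus I may take $v_{i+1}$ to be a maximal neighbour of $v$ with $CCW(v, v_i) \in \mathcal{Z}(v_{i+1})$ and $CCW(v, v_{i+1}) \notin \mathcal{Z}(v_i)$. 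The remaining half of the second wheel condition, $CW(v, v_{i+1}) \in \mathcal{Z}(v_i)$, then comes for free from maximality of $v_i$: if it failed, the arc of $v_{i+1}$ would contain $CCW(v,v_i)$ yet have both of its own endpoints outside the arc of $v_i$, which (as both arcs have width $< \pi$) forces $\mathcal{Z}(v_i) \cap \partial(\mathcal{Z}(v)) \subseteq \mathcal{Z}(v_{i+1}) \cap \partial(\mathcal{Z}(v))$, i.e.\ $v_i \preceq_v v_{i+1}$, contradicting that $v_i$ is maximal.

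The main obstacle is condition~1 of the communication-wheel definition: consecutive rim nodes must actually be adjacent in $\mathcal{G}$, so that $W$ is an honest subgraph and the rim edges exist. This does \emph{not} follow from the covering conditions alone, since two maximal neighbours whose arcs overlap can lie more than $r$ apart, so the successor must be chosen as a common neighbour of $v$ and $v_i$. Here I would exploit the duality of the boundary intersections, $CCW(v, v_i) = CW(v_i, v)$ (the two radius-$r$ circles meet in a single pair of points, labelled oppositely from the two centres): the requirement of covering $CCW(v,v_i)$ can be read on $\partial(\mathcal{Z}(v_i))$ as covering $CW(v_i, v)$, which lets me take $v_{i+1}$ to be a neighbour of $v_i$ and hence supply the rim edge $(v_i, v_{i+1})$. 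Simultaneously keeping $v_{i+1}$ a maximal neighbour of $v$, a neighbour of $v_i$, and strictly advancing is the technical heart of the proof, and is exactly where Lemmas \ref{maximal_dual_relation} and \ref{npreq} on the duality of $\preceq$ should be invoked; I expect this adjacency step, rather than the covering bookkeeping, to be the delicate point.

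Finally I would establish termination and closure. Each step advances the counterclockwise endpoint strictly, and a rim node cannot recur before the sweep has turned through a full $2\pi$ (equal arcs would give equal endpoints); as $v$ has only finitely many maximal neighbours, after finitely many steps the covered arc wraps all the way around and some $v_m$ first satisfies $CW(v, v_1) \in \mathcal{Z}(v_m)$, which is precisely the $i=1$ instance of the second wheel condition. To close the cycle back to $v_1$ I must verify the matching condition $CCW(v, v_m) \in \mathcal{Z}(v_1)$: if it failed, then the arc of $v_m$ would contain $CW(v,v_1)$ while extending counterclockwise past the arc of $v_1$, forcing $\mathcal{Z}(v_1) \cap \partial(\mathcal{Z}(v)) \subseteq \mathcal{Z}(v_m) \cap \partial(\mathcal{Z}(v))$, i.e.\ $v_1 \preceq_v v_m$ with $v_m \neq v_1$, again contradicting maximality of $v_1$. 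Hence $v_1, \dots, v_m$ are distinct maximal neighbours of $v$ satisfying both wheel conditions cyclically, the rim edges are provided by the adjacency step, and together with the hub $v$ they form a communication wheel $W$ having $v_1$ as a rim node.
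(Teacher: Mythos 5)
Your construction is essentially the paper's: iterate Lemma \ref{perimeter_2cover} starting from $v_1$, replace the node it supplies by a $\preceq_v$-maximal neighbour dominating it, sweep counterclockwise until some $v_m$ captures $CW(v,v_1)$, and close the cycle. In fact you are more careful than the paper on several points it dismisses: that domination preserves both properties of the inductive step, that $CW(v,v_{i+1})\in\mathcal{Z}(v_i)$ follows from maximality of $v_i$ (via the width-$<\pi$ argument, since an arc of width $<\pi$ cannot contain the complement of another such arc), that the closing condition $CCW(v,v_m)\in\mathcal{Z}(v_1)$ follows from maximality of $v_1$, and that $m\geq 3$.

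However, the one step you explicitly defer --- adjacency of consecutive rim nodes, which you call the technical heart and only ``expect'' to settle via Lemmas \ref{maximal_dual_relation} and \ref{npreq} --- is a genuine gap in your write-up, and the route you sketch cannot close it. Reading the covering condition as ``$v_{i+1}$ covers $CW(v_i,v)$'' does not make $v_{i+1}$ a neighbour of $v_i$: if $\mathcal{Z}(\cdot)$ were the disk of radius $r$, covering a common point only bounds $d(v_i,v_{i+1})$ by $2r$, and the statement would then actually be \emph{false}. For example, place four neighbours of $v$ at distance $\approx r$ at polar angles $-60^{\circ}$, $60^{\circ}$, $175^{\circ}$, $290^{\circ}$: their arcs (each of half-width $\approx 60^{\circ}$) cover $\partial(\mathcal{Z}(v))$, so $v$ is interior, yet only the pair at $-60^{\circ},290^{\circ}$ is adjacent, so no wheel subgraph exists and no choice of common neighbours can rescue the sweep. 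What actually resolves the issue is pinning down the radius of $\mathcal{Z}(v)$, which the paper never states explicitly: for its results to hold, $\mathcal{Z}(v)$ must be the disk of radius $\frac{r}{2}$ centred at $v$ (this is what Fig.\ \ref{test} draws, and it is exactly the inference ``$A\in\mathcal{Z}(v_{i+1})\cap\mathcal{Z}(v_{i+1}')$ implies $v_{i+1},v_{i+1}'$ adjacent'' used in the proof of Lemma \ref{maintheorem1}). Under that convention your deferred step is a one-line triangle inequality: $CCW(v,v_i)$ lies on $\partial(\mathcal{Z}(v_i))$ and in $\mathcal{Z}(v_{i+1})$, so $d(v_i,v_{i+1})\leq \frac{r}{2}+\frac{r}{2}=r$, and the same argument yields the closing edge $(v_m,v_1)$; it also justifies your silent assumption that the nodes covering $\partial(\mathcal{Z}(v))$ are neighbours of $v$. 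So: same approach as the paper, several steps done more rigorously, but the adjacency step must be settled by the $\frac{r}{2}$ convention rather than by the duality lemmas you invoke.
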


\begin{proof}
 First observe that for any maximal neighbor $v'$ of $v$, $|\partial(\mathcal{Z}(v)) \cap \partial(\mathcal{Z}(v'))| = 2$, i.e., $CW(v, v')$ and $CCW(v, v')$ are distinct points. If not, then suppose that $v'$ is a maximal neighbor of $v$ such that $\partial(\mathcal{Z}(v))$ and $\partial(\mathcal{Z}(v'))$ intersect at a single point, say $P$. Then by Lemma \ref{perimeter_2cover}, there is another neighbor of $v$, say $v''$, such that $P \in \mathcal{Z}(v'')$. Hence we have $v'' \neq v'$, such that $\partial(\mathcal{Z}(v)) \cap \mathcal{Z}(v') = \{P\} \subset \partial(\mathcal{Z}(v)) \cap \mathcal{Z}(v'')$. This contradicts the fact that $v'$ is a maximal neighbor of $v$.
  
 Now take any maximal neighbor $v_1$ of $v$. By Lemma \ref{perimeter_2cover}, choose a maximal $v_2 \in \mathcal{N}(v) \setminus \{v_1\}$, such that $CCW(v, v_1) \in \mathcal{Z}(v_2)$ and $CCW(v, v_2) \notin \mathcal{Z}(v_1)$. Notice that $CW(v, v_1) \notin \mathcal{Z}(v_2)$, because otherwise $v_1 \preceq_{v} v_2$. Since $CCW(v, v_2) \notin \mathcal{Z}(v_1)$, by again invoking Lemma \ref{perimeter_2cover}, we can choose a maximal $v_3 \in \mathcal{N}(v) \setminus \{v_1, v_2\}$, such that $CCW(v, v_2) \in \mathcal{Z}(v_3)$ and $CCW(v, v_3) \notin \mathcal{Z}(v_2)$. Continuing in this manner, after some $m$ steps we shall find $v_m \in \mathcal{N}(v) \setminus \{v_1, \ldots, v_{m-1} \}$, such that $CCW(v, v_{m-1}) \in \mathcal{Z}(v_m)$ and $CW(v, v_1) \in \mathcal{Z}(v_m)$. It is easy to see that a communication wheel of $v$ can be formed with $\{v_1, \ldots, v_m\}$ as rim nodes.
\end{proof}

\begin{corollary}\label{communication_wheel_exists}
 $v \in \mathcal{V}$ is an interior node if and only if it has a communication wheel. 
\end{corollary}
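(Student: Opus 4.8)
The plan is to prove the two implications of the biconditional separately, each of which follows almost immediately from a result already established in this section: the forward (``only if'') direction from Theorem \ref{communication_wheel_from_non_eclipsed}, and the reverse (``if'') direction from Lemma \ref{lemma: wheel cover}. So the corollary should read as a packaging of those two facts rather than as new geometric work.

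For the forward direction, suppose $v$ is an interior node. Theorem \ref{communication_wheel_from_non_eclipsed} already produces a communication wheel of $v$ as soon as it is handed a maximal neighbor of $v$, so the only thing left to check is that such a maximal neighbor exists. Since $v$ is interior, every point of $\partial(\mathcal{Z}(v))$ is covered by another node of the network, so $\mathcal{N}(v) \neq \emptyset$; as $\mathcal{N}(v)$ is finite and $\preceq_v$ is a partial order on it, it must contain a maximal element $v_1$, i.e. a maximal neighbor of $v$. Feeding this $v_1$ into Theorem \ref{communication_wheel_from_non_eclipsed} then yields a communication wheel of $v$.

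For the reverse direction, suppose $v$ possesses a communication wheel $W$. By Lemma \ref{lemma: wheel cover} we have $\partial(\mathcal{Z}(v)) \subset \bigcup_{u \in \mathcal{V}(W) \setminus \{v\}} \mathcal{Z}(u)$. Every node $u$ appearing in this union lies in $\mathcal{V} \setminus \{v\}$, so each point $z \in \partial(\mathcal{Z}(v))$ satisfies $z \in \mathcal{Z}(v')$ for some $v' \in \mathcal{V} \setminus \{v\}$. This is precisely the defining property of an interior node, so $v$ is interior, completing the proof.

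Because the substantive geometric content has already been carried out in Theorem \ref{communication_wheel_from_non_eclipsed} and Lemma \ref{lemma: wheel cover}, I do not anticipate a genuine obstacle here. The only point that needs a moment of care is the existence of a maximal neighbor in the forward direction, which reduces to the elementary fact that a nonempty finite poset has a maximal element, combined with the observation that an interior node cannot be isolated (its boundary must be covered by some neighbor). Everything else is a direct appeal to the two cited results.
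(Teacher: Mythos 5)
Your proof is correct and matches the paper's intent exactly: the paper states this corollary without a written proof, treating it as an immediate consequence of Theorem \ref{communication_wheel_from_non_eclipsed} (for the ``only if'' direction) and Lemma \ref{lemma: wheel cover} (for the ``if'' direction), which is precisely the packaging you give. Your extra care about the existence of a maximal neighbor is sound (the paper itself realizes this existence concretely via Lemma \ref{eclipse_distance_relation}, taking the closest neighbor, but the finite-poset argument is equally valid).
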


\begin{lemma}\label{r_neighbor_is_adj_to_r_neighbor}
 Let $v \in \mathcal{V}$ be an interior node and $W$ be a communication wheel of $v$. If $u \in \mathcal{V}$ is a neighbor of $v$, then $u$ is either a rim node of $W$ or adjacent to some rim node of $W$.
\end{lemma}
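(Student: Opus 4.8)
The plan is to dispose of the trivial case where $u$ is itself a rim node, and otherwise to exhibit a single rim node adjacent to $u$. The guiding idea is that the correct ``test point'' on $\partial(\mathcal{Z}(v))$ is not an arbitrary point of $\mathcal{Z}(u) \cap \partial(\mathcal{Z}(v))$, but the point $M$ at which the ray $\overrightarrow{vu}$ meets $\partial(\mathcal{Z}(v))$, i.e.\ the point of $\partial(\mathcal{Z}(v))$ nearest to $u$ (this is well defined since $u \ne v$ and $u \in \mathcal{Z}(v)$). Since $\partial(\mathcal{Z}(v))$ is covered by the rim nodes by Lemma \ref{lemma: wheel cover}, the point $M$ lies in $\mathcal{Z}(v_i)$ for some rim node $v_i$. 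If $v_i = u$, then $u$ is a rim node and we are done; otherwise I will show $d(u, v_i) \le r$, so that $u$ is adjacent to $v_i$.

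First I would extract an angular constraint from the covering condition. Write $\theta = \angle u v v_i$, and note that $M$ lies on the ray $\overrightarrow{vu}$ at distance $r$ from $v$, so $\angle M v v_i = \theta$. The relation $M \in \mathcal{Z}(v_i)$ means $d(v_i, M) \le r$; applying the law of cosines in the triangle $v\,M\,v_i$ (with $d(v,M) = r$) then yields $\cos\theta \ge d(v,v_i)/(2r) \ge 0$. The crucial observation is that covering the \emph{nearest} boundary point $M$ forces $v_i$ to lie within a bounded angle of the direction of $u$ as seen from $v$; this is precisely the information that a crude triangle inequality through any covered point would discard.

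Then I would feed this angle bound into the law of cosines for the triangle $v\,u\,v_i$. Writing $a = d(v,u) \le r$ and $b = d(v,v_i) \le r$, we obtain
\[
 d(u,v_i)^2 = a^2 + b^2 - 2ab\cos\theta \le a^2 + b^2\Bigl(1 - \tfrac{a}{r}\Bigr),
\]
and since $b \le r$ and $r - a \ge 0$, this is at most $a^2 + r^2 - ra = r^2 - a(r-a) \le r^2$. Hence $d(u,v_i) \le r$, so $u$ is adjacent to the rim node $v_i$, as required.

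The only genuinely delicate step is the choice of $M$ as the foot of the ray $\overrightarrow{vu}$ on $\partial(\mathcal{Z}(v))$, rather than, say, one of the boundary intersections $CW(v,u)$ or $CCW(v,u)$, or the midpoint of $A_u$ measured in angle. Any such alternative covered point only gives $d(u,v_i) \le 2r$ by the triangle inequality and is too weak to conclude adjacency; it is the interplay between ``$M$ is the closest boundary point to $u$'' and ``$M$ is covered by a rim node'' that pins down the angle $\theta$ and closes the factor-of-two gap. Everything after that is the elementary two-triangle computation above, using nothing beyond $d(v,u),\, d(v,v_i) \le r$ and Lemma \ref{lemma: wheel cover}.
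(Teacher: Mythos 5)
Your proof is correct, and it is worth noting that the paper itself gives no argument here: its proof of this lemma is literally ``Easy to see.'' Your writeup therefore supplies details the paper omits, and the details check out. The two computational steps are both valid: since $d(v,M)=r$ and $M\in\mathcal{Z}(v_i)$, the law of cosines in the triangle $v\,M\,v_i$ gives $r^2+b^2-2rb\cos\theta\le r^2$, hence $\cos\theta\ge b/(2r)\ge 0$ (with $b=d(v,v_i)>0$); substituting this into the law of cosines in the triangle $v\,u\,v_i$ gives
\[
 d(u,v_i)^2 \;\le\; a^2+b^2\Bigl(1-\tfrac{a}{r}\Bigr)\;\le\; a^2+r^2-ar\;=\;r^2-a(r-a)\;\le\; r^2,
\]
using $0<a\le r$ and $b\le r$. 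Your emphasis on choosing the test point $M$ on the ray $\overrightarrow{vu}$ is also the right one: picking an arbitrary point of $\mathcal{Z}(u)\cap\partial(\mathcal{Z}(v))$ covered by a rim node only yields $d(u,v_i)\le 2r$ via the triangle inequality, which does not establish adjacency. Moreover, your bound is tight --- taking $v_i\in\partial(\mathcal{Z}(v))$ with $d(v_i,M)=r$ (so $\theta=\pi/3$) gives $d(u,v_i)^2=r^2-a(r-a)$, which tends to $r^2$ as $a\to r$ --- so an argument of exactly this precision is genuinely needed, and the lemma is less ``easy to see'' than the paper's one-line dismissal suggests. The only cosmetic remark: the claim that $M$ is the \emph{nearest} boundary point to $u$ is true but never used; all that matters is that $M$ lies on the ray $\overrightarrow{vu}$ at distance $r$ from $v$.
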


\begin{proof}
 Easy to see.
\end{proof}

\begin{lemma}\label{1_implies_eclipsed}
 Let $v \in \mathcal{V}$ be an interior node and $W$ be a communication wheel of $v$. If $u \in \mathcal{V}$ is a neighbor of $v$, which is adjacent to exactly one rim node of $W$, say $u'$, then $u \preceq_v u'$.
\end{lemma}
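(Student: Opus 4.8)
The plan is to translate the conclusion $u \preceq_v u'$ into the geometric statement that the arc $\mathcal{Z}(u) \cap \partial(\mathcal{Z}(v))$ is contained in the arc $\mathcal{Z}(u') \cap \partial(\mathcal{Z}(v))$. Since each neighbour of $v$ cuts out an arc of $\partial(\mathcal{Z}(v))$ of angular width at most $\pi$ (the half-width is $\arccos(d(\cdot,v)/2r) \le \pi/2$), every such arc is a minor arc, and containment of one minor arc in another is equivalent to containment of its two endpoints. Hence it suffices to show that both boundary intersections $CW(v,u)$ and $CCW(v,u)$ lie in $\mathcal{Z}(u')$. First I record two reductions. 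The node $u$ is not itself a rim node, because a rim node is joined to its two rim-neighbours by rim edges and is therefore adjacent to at least two rim nodes, whereas $u$ is adjacent to exactly one. And by Lemma~\ref{r_neighbor_is_adj_to_r_neighbor}, $u$ is adjacent to some rim node, which by hypothesis can only be $u'$.

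I would then argue by contradiction: suppose the arc of $u$ is not contained in the arc of $u'$, so that one endpoint, say $CCW(v,u)$, lies outside $\mathcal{Z}(u')$; that is, the arc of $u$ protrudes past $CCW(v,u')$ on the counter-clockwise side (the clockwise case being symmetric). Writing $u' = v_j$, the defining overlap condition of the communication wheel gives $CCW(v,v_j) \in \mathcal{Z}(v_{j+1})$, where $v_{j+1} = CCW_W(u')$ is the next rim node; combined with the covering Lemma~\ref{lemma: wheel cover}, this shows that the sliver of $\partial(\mathcal{Z}(v))$ that $u$ covers beyond $CCW(v,v_j)$ is covered by $v_{j+1}$. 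The target is to promote this to the assertion that $u$ is \emph{adjacent} to $v_{j+1}$, i.e. $d(u,v_{j+1}) \le r$, which would exhibit a second rim neighbour of $u$ and contradict the hypothesis, thereby forcing the arc of $u$ inside the arc of $u'$.

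The step I expect to be the main obstacle is exactly this promotion, because mere overlap of two neighbours' arcs on $\partial(\mathcal{Z}(v))$ does \emph{not} by itself imply that the two nodes lie within communication range. To close it one must use the full structure at hand: that $u$ already touches $v_j$, that $v_j$ and $v_{j+1}$ are joined by a rim edge, and the overlap $CCW(v,v_j) \in \mathcal{Z}(v_{j+1})$. Concretely, I would work with the point $CCW(v,v_j)$, which lies at distance exactly $r$ from both $v$ and $v_j$, and with the triangle $v, v_j, v_{j+1}$. The protrusion of $u$'s arc past $CCW(v,v_j)$ places $CCW(v,v_j) \in \mathcal{Z}(u)$, confining $u$ to the lens $\mathcal{Z}(v) \cap \mathcal{Z}(CCW(v,v_j))$; an elementary planar-distance computation, of the same flavour as Lemmas~\ref{eclipse_distance_relation} and~\ref{eclipse_dual_relation}, should then show that within this region the position of $u$ forces $d(u,v_{j+1}) \le r$. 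Carrying out the symmetric argument on the clockwise side with $CW(v,u)$ and $CW_W(u')$ places both endpoints of $u$'s arc inside $\mathcal{Z}(u')$, and hence $u \preceq_v u'$.
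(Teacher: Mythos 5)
Your plan is built on reading $\mathcal{Z}(w)$ as the disk of radius $r$ centred at $w$ --- this is visible in your half-width formula $\arccos(d(\cdot,v)/2r)$ and in your claim that $CCW(v,v_j)$ lies at distance exactly $r$ from $v$ and $v_j$. Under that reading, the ``promotion'' step you correctly flag as the main obstacle is not merely hard: it is false, and so is the lemma itself, so no planar-distance computation can close the gap. Concretely, take $r=1$, $v$ at the origin, and rim nodes $v_j,v_{j+1},v_{j+2}$ at distance $2\cos 75^{\circ}\approx 0.52$ from $v$ in directions $15^{\circ},165^{\circ},240^{\circ}$. Each rim arc then has half-width $75^{\circ}$, the three arcs $[-60^{\circ},90^{\circ}]$, $[90^{\circ},240^{\circ}]$, $[165^{\circ},315^{\circ}]$ cover $\partial(\mathcal{Z}(v))$, consecutive rim nodes are at distance at most $1$, every $CW$/$CCW$ incidence required of a communication wheel holds, and each rim node is a maximal neighbour of $v$ (perturb the angles slightly to make all inequalities strict). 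Now place $u$ at distance $0.99$ from $v$ in direction $31^{\circ}$. Then $u$ is adjacent to $v_j$ (distance $\approx 0.51$), $u$ covers $CCW(v,v_j)$, $u$ lies in your lens $\mathcal{Z}(v)\cap\mathcal{Z}(CCW(v,v_j))$, and every side condition you invoke is satisfied; yet $d(u,v_{j+1})\approx\sqrt{2}$ and $d(u,v_{j+2})\approx 1.46$, so $u$ is adjacent to exactly one rim node while its arc protrudes past $CCW(v,v_j)$, i.e.\ $u\not\preceq_v v_j$. The structural reason is that the lens $\mathcal{Z}(v)\cap\mathcal{Z}(CCW(v,v_j))$ has diameter $\sqrt{3}\,r$, and the extra constraints ($u$ touches $v_j$, the rim edge, the overlap at $CCW(v,v_j)$) do not force the two candidate positions within distance $r$ of each other.

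What makes the lemma true --- and what the paper silently relies on in calling it ``easy to see'' --- is that $\mathcal{Z}(w)$ is the disk of radius $r/2$, not $r$. The paper never defines $\mathcal{Z}$, but this convention is forced by its other arguments: in the proof of Lemma \ref{maintheorem1} the authors infer from $A\in\mathcal{Z}(v_{i+1})\cap\mathcal{Z}(v_{i+1}')$ that $v_{i+1}$ and $v_{i+1}'$ are \emph{adjacent}, an inference valid only when two zones sharing a point forces their centres within $r$; Lemma \ref{perimeter_2cover} likewise treats every node covering a point of $\partial(\mathcal{Z}(v_1))$ as a neighbour of $v_1$; and the caption of Fig.~\ref{test} draws the zones with radius $\frac{r}{2}$. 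With this convention the proof is two lines and needs neither your minor-arc endpoint reduction nor any contradiction geometry: if $u\not\preceq_v u'$, pick $q\in\bigl(\mathcal{Z}(u)\cap\partial(\mathcal{Z}(v))\bigr)\setminus\mathcal{Z}(u')$; by Lemma \ref{lemma: wheel cover}, $q\in\mathcal{Z}(v_k)$ for some rim node $v_k$, and $v_k\neq u'$ since $q\notin\mathcal{Z}(u')$; then $d(u,v_k)\le d(u,q)+d(q,v_k)\le \frac{r}{2}+\frac{r}{2}=r$, so $v_k$ is a second rim node adjacent to $u$, contradicting the hypothesis. Your overall skeleton --- a point of $u$'s arc outside $\mathcal{Z}(u')$ is covered by some other rim node, which must then be a second rim neighbour of $u$ --- is exactly the right one, and your diagnosis that arc overlap alone does not yield adjacency is correct; the missing ingredient is the zone-radius convention, which turns that overlap into adjacency by the triangle inequality, rather than any finer computation inside the lens.
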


\begin{proof}
 Easy to see.
\end{proof}

%
%


%
%
%

\begin{lemma}\label{maintheorem1}
 Let $v \in \mathcal{V}$ be a strongly interior node and $u$ a neighbor of $v$. If $W_1$ is a communication wheel of $v$, then there is a globally rigid subgraph of $\mathcal{G}$ containing $v, u$ and $W_1$.
 
\end{lemma}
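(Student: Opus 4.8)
The plan is to start from the observation that $W_1$, being a wheel, is globally rigid by Theorem \ref{jack}, and to grow it into the required subgraph using two rigidity-preserving operations, both justified through the characterization in Theorem \ref{jack}: (T1) attaching a new vertex adjacent to at least three vertices of a globally rigid graph keeps it globally rigid, and (T2) the union of two globally rigid graphs sharing three pairwise-adjacent vertices (which are non-collinear, since no three nodes are collinear) is globally rigid. For (T1) one checks that the enlarged graph stays $3$-connected and rigid after deleting any single edge; for (T2) one checks the same two properties of the union, using that each piece is redundantly rigid and that a shared triangle forbids a $2$-separation. Now, by Lemma \ref{r_neighbor_is_adj_to_r_neighbor}, $u$ is either a rim node of $W_1$, in which case $W_1$ is already the desired subgraph, or $u$ is adjacent to some rim node of $W_1$. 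If $u$ is adjacent to at least two rim nodes, then together with the hub $v$ it has at least three neighbors in $W_1$, and (T1) yields the globally rigid graph $W_1+u$.

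The only remaining case is that $u$ is adjacent to exactly one rim node $u'$ of $W_1$. Here Lemma \ref{1_implies_eclipsed} gives $u \preceq_v u'$, and Lemma \ref{npreq} then gives $u \npreceq_{u'} v$. Since $u'$ is a rim node it is a maximal neighbor of $v$, so by Lemma \ref{maximal_dual_relation} $v$ is a maximal neighbor of $u'$; and $u'$ is interior because $v$ is strongly interior. Hence by Theorem \ref{communication_wheel_from_non_eclipsed} I can fix a communication wheel $W'$ of $u'$ having $v$ as a rim node. In $W'$, the node $u$ is adjacent to the hub $u'$ and to the rim node $v$. If $u$ were adjacent to exactly one rim node of $W'$, that rim node would have to be $v$, and Lemma \ref{1_implies_eclipsed} would force $u \preceq_{u'} v$, contradicting $u \npreceq_{u'} v$. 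Therefore $u$ is a rim node of $W'$, or $u$ is adjacent to at least two rim nodes of $W'$, namely $v$ and some $v^{*}$.

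To finish I want to fuse $W_1$ and $W'$ into one globally rigid graph and then attach $u$. The two wheels obviously share the edge $vu'$, but an edge is not enough: sharing only $\{v,u'\}$ creates a $2$-separation (one could reflect $W'$ across the line $vu'$), so I need a third common vertex off that line. The candidate is $b := CCW_{W_1}(u')$, the rim node of $W_1$ next to $u'$; by the definition of communication wheel it covers the boundary intersection $P := CCW(v,u')$, and as a consecutive rim node of $u'$ it is a common neighbor of $v$ and $u'$, so $\{v,u',b\}$ is a triangle of $W_1$. Running the argument of the previous paragraph with $b$ in place of $u$ (it too is adjacent to the hub $u'$ and to the rim node $v$ of $W'$), I get, provided $b \npreceq_{u'} v$, that $b$ is a rim node of $W'$ or adjacent to at least two rim nodes of $W'$. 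In the first case the triangle $\{v,u',b\}$ already lies in $W'$; in the second, (T1) lets me adjoin $b$ to $W'$ as a degree-$\geq 3$ vertex, producing a globally rigid ${W'}^{+}$ again containing this triangle. Either way $W_1$ and ${W'}^{+}$ share the triangle $\{v,u',b\}$, so by (T2) $H := W_1 \cup {W'}^{+}$ is globally rigid and contains $W_1$ together with $v,u'$ and all rim nodes of $W'$. If $u$ is a rim node of $W'$ then $u \in H$ and we are done; otherwise $u$ is adjacent to $v,u',v^{*}$, all in $H$, so (T1) gives the globally rigid graph $H+u \supseteq W_1$ containing both $v$ and $u$.

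The step I expect to be genuinely delicate is the geometric claim $b \npreceq_{u'} v$ — equivalently, that the $W_1$-rim node straddling the boundary intersection $P$ is not eclipsed from $u'$ by $v$. This is exactly what supplies the crucial third, off-axis shared vertex and rules out the reflection of $W'$ across $vu'$. I would prove it from the circle geometry of $\partial(\mathcal{Z}(v))$ and $\partial(\mathcal{Z}(u'))$: the point $P$ is an endpoint of the arc of $\partial(\mathcal{Z}(u'))$ that $v$ covers, and $b$ covers $P$ while covering a neighborhood of it on $\partial(\mathcal{Z}(u'))$ that protrudes past $v$'s arc, so the trace of $b$ on $\partial(\mathcal{Z}(u'))$ cannot be contained in that of $v$, which is precisely $b \npreceq_{u'} v$. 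Lemmas \ref{eclipse_distance_relation} and \ref{eclipse_dual_relation}, together with the $CW/CCW$ conventions, should make this rigorous; it is the one place where the purely combinatorial wheel lemmas do not suffice and a metric argument is needed.
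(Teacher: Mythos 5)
Your proposal is mostly sound and its gluing step takes a genuinely different route from the paper's, but it pivots on the one claim you only sketch, and that sketch has a real loophole. The reduction to the case where $u$ is adjacent to exactly one rim node $u'$, the construction of the second wheel $W'$ of $u'$ with $v$ on its rim, your rigidity toolkit (T1)/(T2), and the attachment of $u$ via Lemma \ref{1_implies_eclipsed} together with $u \npreceq_{u'} v$ (from Lemma \ref{npreq}) are all correct. The problem is the claim $b \npreceq_{u'} v$ for $b = CCW_{W_1}(u')$. Your argument needs $P = CCW(v,u')$ to lie in the \emph{interior} of $\mathcal{Z}(b)$, so that the trace of $b$ on $\partial(\mathcal{Z}(u'))$ contains a two-sided neighborhood of $P$ and therefore protrudes past the trace of $v$, which ends at $P$. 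But if $P \in \partial(\mathcal{Z}(b))$ --- the three circles $\partial(\mathcal{Z}(v))$, $\partial(\mathcal{Z}(u'))$, $\partial(\mathcal{Z}(b))$ concurrent at $P$, i.e., $v,u',b$ all at the same distance from $P$ --- then $P$ is an endpoint of $b$'s trace as well, that trace may extend only into $v$'s side, and nothing in your argument rules out $b \preceq_{u'} v$. The paper's general-position hypothesis (no three nodes collinear) does not exclude this concurrent configuration, so the pivotal step is genuinely unproven as written.

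The repair is available, and it is precisely the mechanism the paper uses instead of any eclipse claim: since the zones are disks of radius $r/2$, two nodes whose zones share a point are at distance at most $r$, hence adjacent --- this is the very implication the paper's own proof invokes. By the definition of $W_1$, $b$ covers $P = CCW(v,u') = CW(u',v)$; by the definition of $W'$, the rim node $CW_{W'}(v)$ also covers $CW(u',v)$. Hence either $b = CW_{W'}(v)$, so $b$ is itself a rim node of $W'$, or $b$ is adjacent to $CW_{W'}(v)$ and therefore to the three vertices $u'$, $v$, $CW_{W'}(v)$ of $W'$, so your (T1) applies; either way you obtain the shared triangle $\{v,u',b\}$ and (T2) finishes exactly as you intended, with no genericity beyond the paper's. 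For comparison, the paper runs this common-covering observation symmetrically on both sides of $u'$: it inserts both rim-neighbors of $u'$ from $W_1$ into the rim cycle of the second wheel to get a larger wheel $W_3$ sharing three vertices with $W_1$, and then attaches $u$ by a covering argument (the rim of $W_3$ covers $\partial(\mathcal{Z}(u'))$ while $u \npreceq_{u'} v$). Your attachment of $u$ through Lemma \ref{1_implies_eclipsed} is a clean, purely lemma-based alternative to that last step; only your $b$-step needs the fix above.
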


\begin{proof}
  
  If $u$ is a rim node of $W_1$, then we are done, since a wheel graph is globally rigid. So, suppose that $u$ is not a rim node of $W_1$.
 
 
 Then by Lemma \ref{r_neighbor_is_adj_to_r_neighbor}, $u$ is adjacent to a rim node of $W_1$, say $v_i$. If $u$ is adjacent to another rim node, then $u$ can be added to $W_1$ to form a globally rigid graph. Hence, we assume that $u$ is adjacent to only one rim node of $W_1$, i.e., $v_i$. Then by Lemma \ref{1_implies_eclipsed}, we have $u \preceq_v v_i$.
 
 Since $v$ is a strongly interior node and $v_i$ is a neighbor of $v$, $v_i$ must be an interior node. Also, since $v_i$ is a maximal neighbor of $v$, $v$ is also a maximal neighbor of $v_i$, by Lemma \ref{maximal_dual_relation}. Hence, by Theorem \ref{communication_wheel_from_non_eclipsed}, $v_i$ has a communication wheel $W_2$ having $v$ as a rim node.

 See Fig. \ref{configurations_fig_a}. Let $v_{i-1} = CW_{W_1}(v_i)$ and $v_{i+1} = CCW_{W_1}(v_i)$. Also, let $v_{i+1}' = CW_{W_2}(v)$ and $v_{i-1}' = CCW_{W_2}(v)$. Now let $A = CCW(v, v_i) = CW(v_i, v)$ and $B = CW(v, v_i) = CCW(v_i, v)$. So we must have $A \in \mathcal{Z}(v_{i+1}) \cap \mathcal{Z}(v_{i+1}')$ and $B \in \mathcal{Z}(v_{i-1}) \cap \mathcal{Z}(v_{i-1}')$. This implies that $v_{i-1}, v_{i-1}'$ and $v_{i+1}, v_{i+1}'$ are adjacent. So $v_{i-1}$ and $v_{i+1}$ can be added to the list of rim nodes of $W_2$ and construct a wheel $W_3$ of $v_i$ having $v_{i-1}, v, v_{i+1}$ as rim nodes. Then the two globally rigid graphs $W_1$ and $W_3$ have three nodes common, namely $v_{i-1}, v, v_{i+1}$. Hence $W_1 \cup W_3$ is globally rigid. See Fig. \ref{configurations_fig_b}.

\begin{figure}[h!]
\centering
\subcaptionbox[Short Subcaption]{
        \label{configurations_fig_a}
}
[
    0.56\textwidth 
]
{
    \fontsize{10pt}{10pt}\selectfont
    \def\svgwidth{0.56\textwidth}
    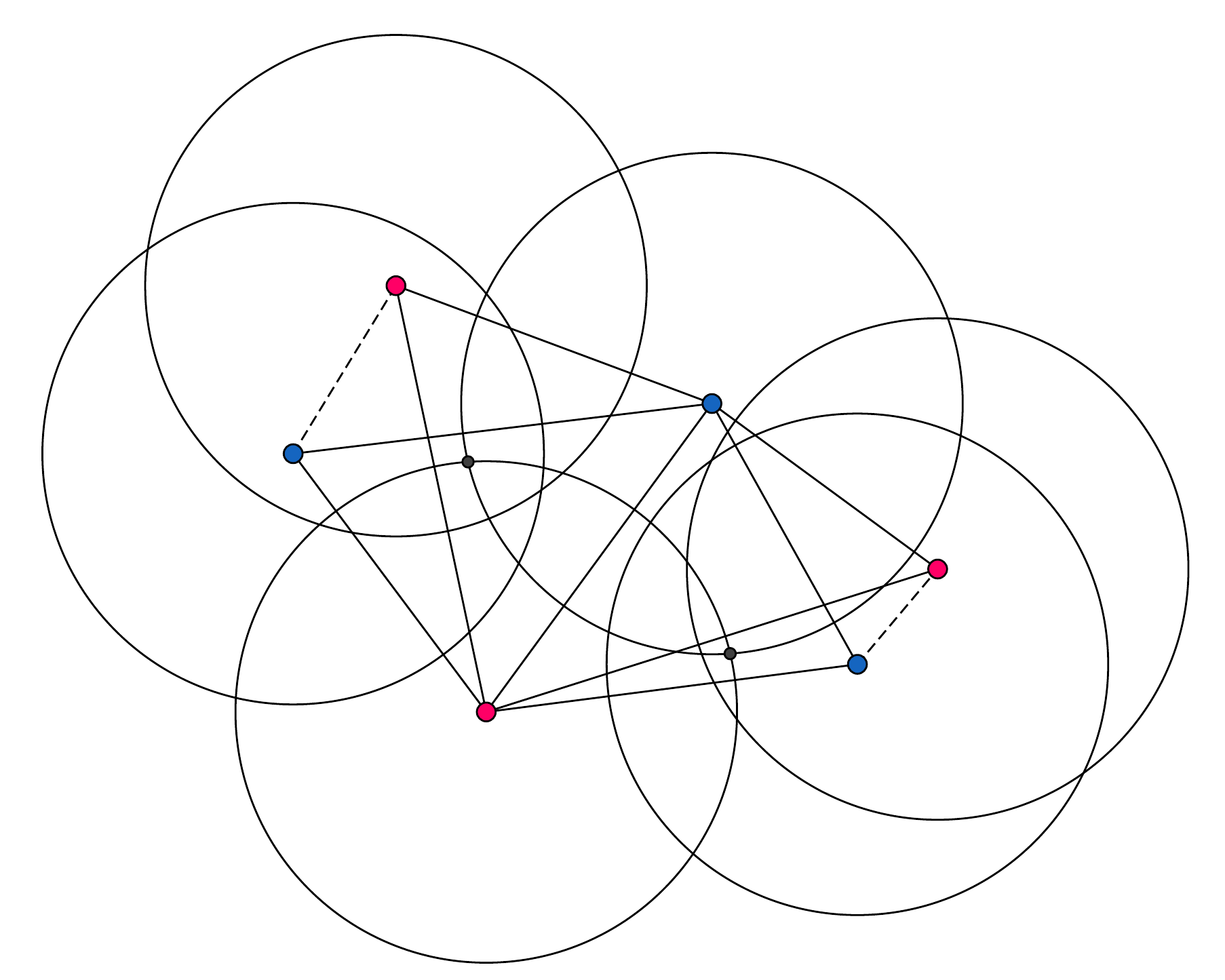
}
\hfill 
\subcaptionbox[Short Subcaption]{
     \label{configurations_fig_b}
}
[
    0.43\textwidth 
]
{
    \fontsize{10pt}{10pt}\selectfont
    \def\svgwidth{0.43\textwidth}
    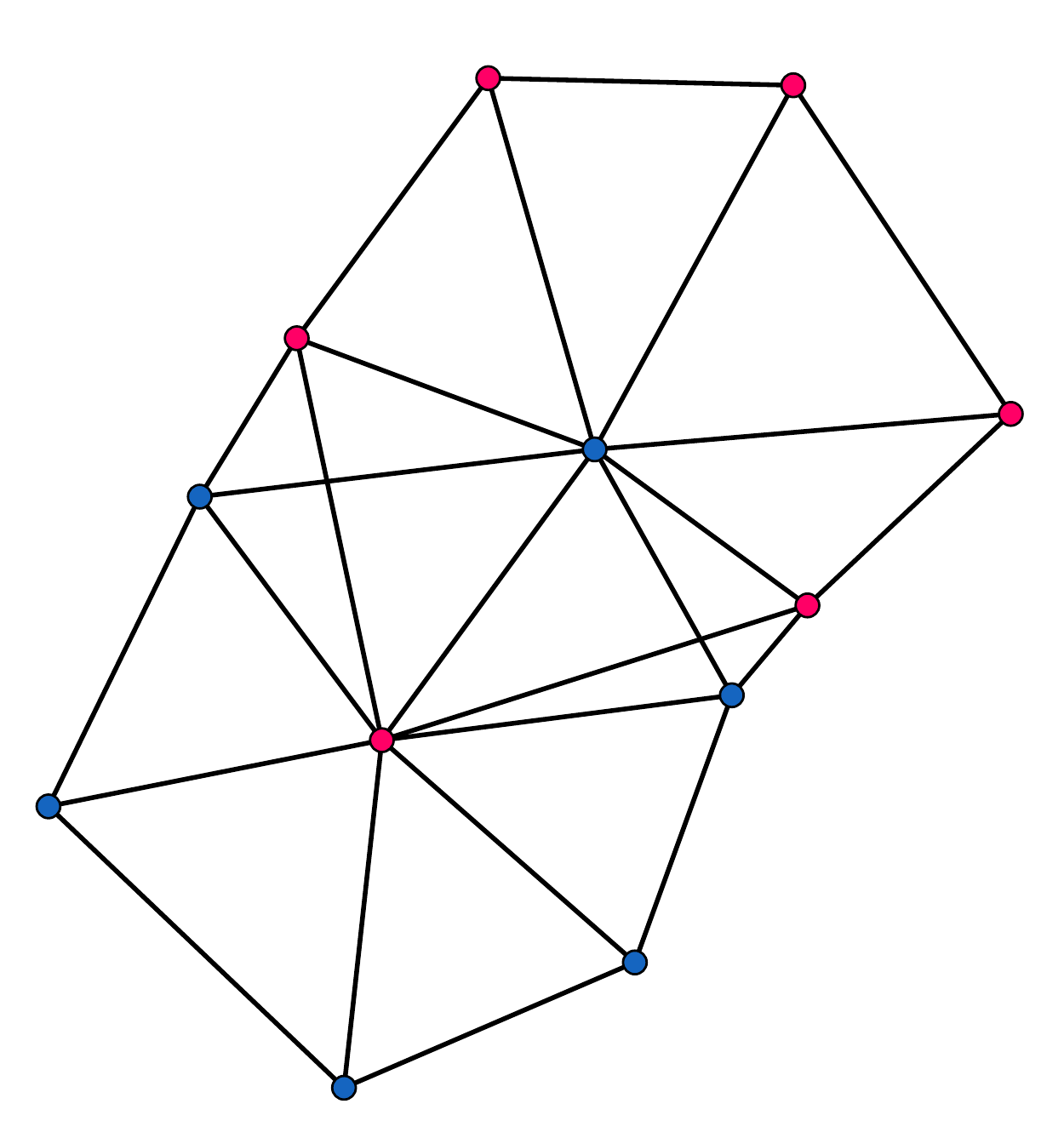
}
\caption[Short Caption]{Illustrations supporting the proof of Lemma \ref{maintheorem1}.}
\label{configurations}
\end{figure}

  Now it is sufficient to prove that $u$ is adjacent to at least three nodes of $W_1 \cup W_3$. Let $\mathcal{R}im(W_3)$ $= \mathcal{V}(W_3) \setminus \{v_i\}$ be the set of rim nodes of $W_3$. We have $\partial(\mathcal{Z}(v_i)) \subset \bigcup\limits_{w \in \mathcal{R}im(W_3)} \mathcal{Z}(w)$. Since $u \preceq_v v_i$, we have $u \npreceq_{v_i} v$, by Lemma \ref{npreq}. In other words, $\partial(\mathcal{Z}(v_i)) \cap \mathcal{Z}(u) \nsubseteq \partial(\mathcal{Z}(v_i)) \cap \mathcal{Z}(v)$. Therefore, we have $\partial(\mathcal{Z}(v_i)) \cap \mathcal{Z}(u) \subset \partial(\mathcal{Z}(v_i)) \subset \bigcup\limits_{w \in \mathcal{R}im(W_3)} \mathcal{Z}(w)$ $\Rightarrow (\partial(\mathcal{Z}(v_i)) \cap Z(u)) \bigcap (\bigcup\limits_{w \in \mathcal{R}im(W_3) \setminus \{v\}} \mathcal{Z}(w))$ $\neq \emptyset$. This implies that $u$ is adjacent to some rim node of $W_3$ other than $v$. We already have assumed that $u$ is adjacent to $v$ and $v_i$. Hence, $u$ is adjacent to at least three nodes of $W_1 \cup W_3$.
  \end{proof}

\begin{theorem}\label{maintheorem2}
 If $v \in \mathcal{V}$ is a strongly interior node, then there is a subgraph $\mathcal{H}_v$ of $\mathcal{G}$ containing $v$ such that 1) $\mathcal{H}_v$ contains all neighbors of $v$, 2) $\mathcal{H}_v$ is globally rigid.
\end{theorem}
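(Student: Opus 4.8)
The plan is to realize $\mathcal{H}_v$ as a union of globally rigid pieces that all share a single fixed communication wheel of $v$, and then glue them together using the standard two-dimensional rigidity-gluing principle. First I would invoke Corollary~\ref{communication_wheel_exists}: since $v$ is strongly interior it is in particular interior, so it possesses a communication wheel $W_1$. This $W_1$ is a wheel graph on $m+1 \geq 4$ vertices --- the hub $v$ together with $m \geq 3$ maximal rim neighbors --- and it will serve as the common ``spine'' onto which everything else is attached.

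Next, for each neighbor $u \in \mathcal{N}(v)$, I would apply Lemma~\ref{maintheorem1} to the pair $(v,u)$ together with the \emph{same} fixed wheel $W_1$. This produces, for every $u$, a globally rigid subgraph $G_u$ of $\mathcal{G}$ that contains $v$, $u$, and all of $W_1$. Setting $\mathcal{H}_v := \bigcup_{u \in \mathcal{N}(v)} G_u$ then gives property~(1) immediately: every neighbor $u$ lies in its own piece $G_u \subseteq \mathcal{H}_v$, and both $v$ and $W_1$ are contained as well.

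For property~(2) I would establish global rigidity of $\mathcal{H}_v$ by gluing the pieces one at a time. The crucial point is that every $G_u$ contains the whole wheel $W_1$, so any partial union and the next piece $G_u$ always share at least the $\geq 4$ vertices of $W_1$, which by the general-position assumption (no three collinear) include three non-collinear points. Using exactly the gluing principle already exploited in the proof of Lemma~\ref{maintheorem1} --- two globally rigid subgraphs sharing at least three non-collinear vertices have a globally rigid union --- a short induction on the number of neighbors shows that $\bigcup_{u} G_u$ is globally rigid. Since $\mathcal{N}(v)$ is nonempty (the rim of $W_1$ is nonempty), $\mathcal{H}_v$ is well defined and contains $v$.

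The main obstacle is simply to verify that the gluing principle applies at every stage of the induction: one must confirm that the shared vertex set $W_1$ always has size at least three (guaranteed, since a wheel has at least three rim nodes) and is not accidentally collinear (ruled out by general position), so that the common vertices pin down the isometry between the two realizations. Once this is in place the induction is routine, and no new geometric construction is required beyond what Lemma~\ref{maintheorem1} already supplies.
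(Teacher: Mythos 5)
Your proposal is correct and follows essentially the same route as the paper's own proof: fix one communication wheel $W_1$ of $v$, invoke Lemma~\ref{maintheorem1} once per neighbor to get globally rigid pieces each containing $v$, the neighbor, and all of $W_1$, and glue the pieces along their (at least three) shared wheel vertices. Your write-up merely makes explicit two points the paper leaves implicit --- the induction over neighbors and the non-collinearity of the shared vertices needed for the gluing principle --- which is a welcome but not substantively different elaboration.
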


\begin{proof}

 Let $W_1$ be a communication wheel of $v$. Then by Lemma \ref{maintheorem1}, for each neighbor of $v$ not in $W_1$, we obtain a globally rigid subgraph of $\mathcal{G}$ containing the neighbor, $v$ and $W_1$. So any two of these globally rigid subgraphs have at least three nodes in common. Hence, these graphs constitute to form the desired globally rigid graph. 
\end{proof}

\begin{theorem}\label{maintheorem}
 If the strong interior of $\mathcal{G}$ is connected, then $\mathcal{G}$ has a globally rigid subgraph $\mathcal{R}$ which contains 1) all strongly interior nodes, 2) all non-isolated weakly interior nodes. 
\end{theorem}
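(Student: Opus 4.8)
The plan is to assemble the desired subgraph as the union of the globally rigid pieces supplied by Theorem \ref{maintheorem2}, one for each strongly interior node. Concretely, for every strongly interior node $v$ let $\mathcal{H}_v$ be the globally rigid subgraph of $\mathcal{G}$ containing $v$ and all of $\mathcal{N}(v)$, and set $\mathcal{R} = \bigcup_{v \text{ strongly interior}} \mathcal{H}_v$. The two containment requirements are then immediate: each strongly interior node is the hub of its own $\mathcal{H}_v$, and every non-isolated weakly interior node $w$ is by definition adjacent to some strongly interior node $v$, so $w \in \mathcal{N}(v) \subseteq \mathcal{H}_v \subseteq \mathcal{R}$. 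Hence the entire content of the theorem reduces to proving that this union is globally rigid.

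To control the union I would rely on the gluing principle already used in the proof of Lemma \ref{maintheorem1}: if two globally rigid graphs share at least three vertices, then (these being non-collinear by the general-position assumption) their union is globally rigid. The crux of the whole argument is therefore to show that for any two \emph{adjacent} strongly interior nodes $v$ and $v'$, the subgraphs $\mathcal{H}_v$ and $\mathcal{H}_{v'}$ share at least three vertices. Two of them are free: since $v$ and $v'$ are adjacent, $v' \in \mathcal{N}(v) \subseteq \mathcal{H}_v$ and $v \in \mathcal{N}(v') \subseteq \mathcal{H}_{v'}$, while $v$ and $v'$ are the respective hubs, so both lie in $\mathcal{H}_v \cap \mathcal{H}_{v'}$. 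The work is to produce a third, namely a common neighbor of $v$ and $v'$.

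I would obtain such a common neighbor from a communication wheel. Since $v$ is interior, it has a communication wheel $W$ by Corollary \ref{communication_wheel_exists}. The neighbor $v'$ of $v$ is, by Lemma \ref{r_neighbor_is_adj_to_r_neighbor}, either a rim node of $W$ or adjacent to one. If $v'$ is adjacent to a rim node $u'$, then $u'$ is also a neighbor of the hub $v$, so $u'$ is a common neighbor of $v$ and $v'$. If instead $v'$ is itself a rim node, then its cyclically adjacent rim node $CCW_W(v')$ is joined to $v'$ by a rim edge and to $v$ by a spoke, so it is the desired common neighbor. In either case $\mathcal{H}_v$ and $\mathcal{H}_{v'}$ have the three distinct vertices $v$, $v'$, and this common neighbor in common. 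This common-neighbor claim is the one genuinely delicate step; everything else is bookkeeping.

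With the claim in hand I would finish by an induction driven by the connectivity hypothesis. Because the strong interior is connected, I can list the strongly interior nodes as $v_1, \dots, v_k$ so that every $v_j$ with $j \geq 2$ is adjacent to some $v_i$ with $i < j$ (a breadth- or depth-first ordering of a spanning tree of the strong interior). Put $\mathcal{R}_1 = \mathcal{H}_{v_1}$ and $\mathcal{R}_j = \mathcal{R}_{j-1} \cup \mathcal{H}_{v_j}$. At each step $\mathcal{H}_{v_j}$ shares the three vertices $v_i$, $v_j$, and their common neighbor with $\mathcal{H}_{v_i} \subseteq \mathcal{R}_{j-1}$, so by the gluing principle $\mathcal{R}_j$ is globally rigid whenever $\mathcal{R}_{j-1}$ is. Therefore $\mathcal{R} = \mathcal{R}_k$ is globally rigid, which completes the proof; the general-position assumption is exactly what guarantees the three shared vertices are never collinear, so the gluing principle always applies.
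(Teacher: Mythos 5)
Your proof is correct, and it shares the paper's overall skeleton: both arguments induct along the connectivity of the strong interior and glue together the globally rigid pieces $\mathcal{H}_v$ supplied by Theorem \ref{maintheorem2}, using the principle that two globally rigid graphs sharing three (non-collinear, by general position) vertices have a globally rigid union. The genuine difference is in how the three shared vertices are certified at each gluing step. The paper attaches the piece $\mathcal{H}_{v''}$ of the \emph{already included} attachment node $v''$ (which contains the new node $v'$ as a neighbor), and gets the shared vertices from rigidity theory itself: since the current graph $\mathcal{R}_j$ is globally rigid on at least four vertices, it is 3-connected by Theorem \ref{jack}, so $v''$ has at least three neighbors $v_1, v_2, v_3$ inside $\mathcal{R}_j$, and these lie in $\mathcal{H}_{v''}$ as well, giving the common vertices $v'', v_1, v_2, v_3$ for free. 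You instead attach the piece $\mathcal{H}_{v_j}$ of the \emph{new} node, which forces you to manufacture the third shared vertex by hand as a common neighbor of the adjacent strongly interior nodes $v_i$ and $v_j$; your wheel-based construction of it (via Lemma \ref{r_neighbor_is_adj_to_r_neighbor}, splitting into the cases where $v_j$ is or is not a rim node of the wheel of $v_i$) is sound. What your route buys: you never invoke the minimum-degree consequence of global rigidity, and your global union $\mathcal{R} = \bigcup_v \mathcal{H}_v$ absorbs every non-isolated weakly interior node automatically, whereas the paper needs a separate final extension step to bring those nodes in. What the paper's route buys: no geometric common-neighbor argument is needed, since 3-connectivity of the partially built graph hands over the shared vertices directly.
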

 
\begin{proof}
 
  Choose any strongly interior node $v$. Denote the subgraph of $\mathcal{G}$ consisting of only the node $v$ as $\mathcal{R}_0$. Since the strong interior of $\mathcal{G}$ is connected, every strongly interior node of $\mathcal{G}$ is connected to $v$ by a path consisting strongly interior nodes. The distance of a strongly interior node from $v$ is defined as the smallest length of such a path. Let $m$ be the maximum distance of a strongly interior node from $v$. We shall prove the theorem by inductively constructing globally rigid subgraphs $\mathcal{R}_0, \mathcal{R}_1, \ldots, \mathcal{R}_m$, where $\mathcal{R}_j$ contains all strongly interior nodes at a distance at most $j$ from $v$. $\mathcal{R}_0$ is globally rigid as it is only a singleton node. $\mathcal{R}_1$ is constructed using Theorem \ref{maintheorem2}. Suppose that $\mathcal{R}_0, \mathcal{R}_1, \ldots, \mathcal{R}_j$, $1 \leq j < m$, are already constructed. Now consider a strongly interior node $v'$ at a distance $j+1$ from $v$. In a smallest path from $v$ to $v'$, let $v'$ be adjacent to $v''$. Clearly $v''$ is in $\mathcal{R}_j$. Since $\mathcal{R}_j$ is globally rigid, $v''$ is adjacent to at least three nodes $v_1, v_2, v_3$ in $\mathcal{R}_j$ ($\mathcal{R}_j$ has at least four nodes as it contains the communication wheel of $v$). By Theorem \ref{maintheorem2}, there is a globally rigid graph containing $v''$ and its neighbors $v', v_1, v_2, v_3$. Union of this graph and $\mathcal{R}_j$ is globally rigid as there are at least three nodes in common, namely $v'', v_1, v_2, v_3$, etc. Similarly for each strongly interior node at a distance $j+1$ from $v$, we extend the subgraph $\mathcal{R}_j$ preserving global rigidity to eventually obtain a globally rigid graph $\mathcal{R}_{j+1}$ containing all the strongly interior nodes at a distance at most $j+1$ from $v$. The inductive argument leads to the globally rigid subgraph $\mathcal{R}_m$ which contains all strongly interior nodes in $\mathcal{G}$. Each non-isolated weakly interior node $v'''$ is adjacent to some strongly interior node in $\mathcal{R}_m$. For each such $v'''$, again by the same construction, we can extend $\mathcal{R}_m$ preserving global rigidity to include $v'''$, if it is not already in $\mathcal{R}_m$. The resulting graph is the desired globally rigid subgraph $\mathcal{R}$. 
\end{proof}

\section{The Localization Algorithm}\label{sec_algo}

In the beginning, each node messages its neighbor list along their distances form itself to all its neighbors. Therefore, every node $u \in \mathcal{V}$ knows the neighbors of all its neighbors and also if $v$ is a neighbor of $u$ and $w$ is a neighbor of $v$, then $u$ knows $d(v,w)$ as well.
The three main stages of our algorithm are 1) construction of communication wheel, 2) leader election, and 3) propagation. They are discussed in detail in the following subsections.

\begin{figure}[h!]
\centering
\subcaptionbox[Short Subcaption]{
        \label{configurations_fig_a}
}
[
    0.45\textwidth 
]
{
    \fontsize{10pt}{10pt}\selectfont
    \def\svgwidth{0.46\textwidth}
    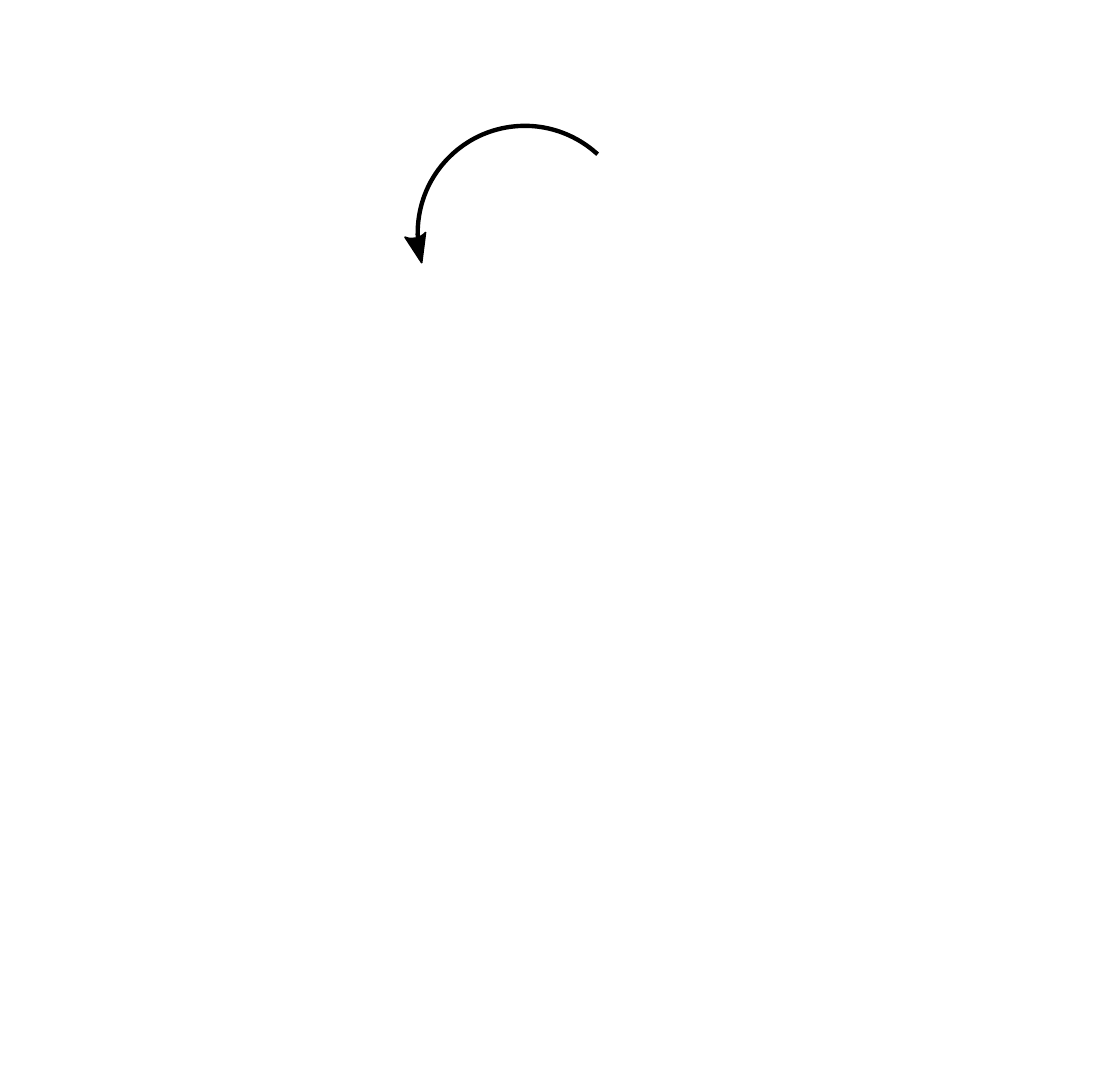
}
\hfill 
\subcaptionbox[Short Subcaption]{
     \label{configurations_fig_b}
}
[
    0.4\textwidth 
]
{
    \fontsize{10pt}{10pt}\selectfont
    \def\svgwidth{0.4\textwidth}
    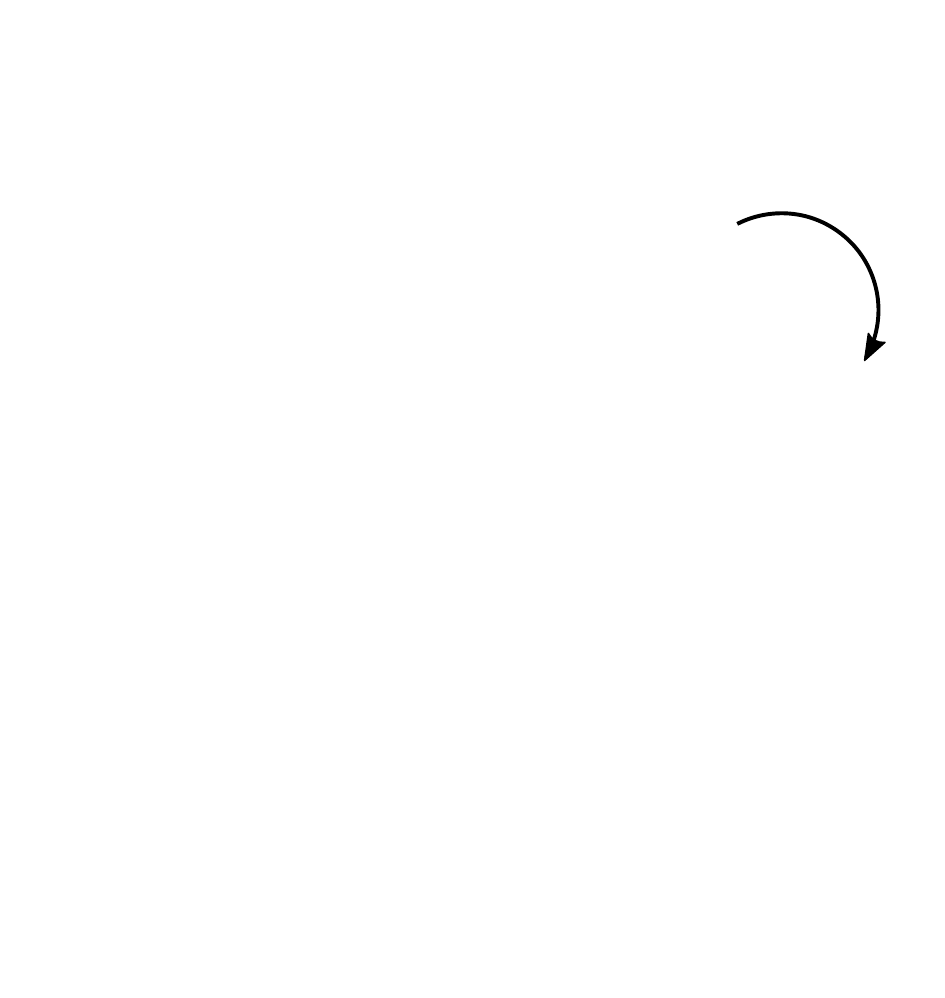
}
\caption[Short Caption]{A node $v$ executing Algorithm \ref{wheel_algo} sets its local coordinate system in such a way that $w_1$ gets positive $Y$-coordinate.}
\label{configurations}
\end{figure}

\begin{algorithm}[H]
    \setstretch{0.1}
    \SetKwProg{Fn}{Procedure}{}{}
    \SetKwRepeat{Do}{do}{while}

    
    \nonl \emph{\{The node $v$ constructs a communication wheel. If it successfully constructs a communication wheel, it declares itself as an interior node, or otherwise a boundary node.\}}
    
    \nonl \Fn{\textsc{CommunicationWheel}($v$)}{
    
    $w = [\ ]$  \; 
    
    $v.position$ $\longleftarrow$ origin\;

    $w_{0}$ $\longleftarrow$ closest neighbor of $v$ \;
    
    $w_{0}.position$ $\longleftarrow$ on the $X$-axis according to the distance between $v$ and $w_{0}$\;

    $w_{1}$ $\longleftarrow$ the common neighbor of $v$ and $w_{0}$ closest to $v$ that covers a boundary-intersection of $v$ and $w_{0}$, such that $w_{1} \npreceq_{v} w_{0}$ \;
            
    \eIf{(no such $w_1$ is found)}
      {
        $v.type$ $\longleftarrow$ \emph{boundary}\;
        \textbf{break}\;
      }
      {
        $w_{1}.position$ $\longleftarrow$ one of the two possible positions preserving the distances from $v$ and $w_{0}$ such that $w_{1}$ has positive $Y$-coordinate\;
        
        $CCW(v, w_{0})$ $\longleftarrow$ the boundary-intersection of $v$ and $w_{0}$ covered $w_{1}$
        
        $i = 1$ \;
        
        \Do{($w_{i} \neq$ null $\&$ $w_{i}$ does not cover $CW(v,w_0$))}{
        
        $i++$\;
        
        \textsc{NextRim}($v$,$w_{i-1}$,$w_{i-2}$)\;
        
        }
        
      \eIf{($w_{i} =$ null)}
      {
	$v.type$ $\longleftarrow$ \emph{boundary}\;
      }
      {
	$v.type$ $\longleftarrow$ \emph{interior}\;
      }
      
      }
      }
    \caption{\textsc{CommunicationWheel}}
    \label{wheel_algo}
\end{algorithm}
%

  \begin{algorithm}[H]
    \setstretch{0.1}
    \SetKwProg{Fn}{Function}{}{}
    \SetKwRepeat{Do}{do}{while}

    
    \nonl \emph{\{Given two consecutive rim nodes $w_{i-1}$ and $w_{i-2}$, the node $v$ searches for the next rim node.\}}
    
    \nonl \Fn{\textsc{NextRim}($v$,$w_{i-1}$,$w_{i-2}$)}{
    
    $D$ $\longleftarrow$ $r$ \;
    
    \For{($u \in \mathcal{N}(w_{i-1}) \cap \mathcal{N}(v)$)}{
    
    \If{(distance between $u$ and $v$ $\leq D$)}{
    
    \eIf{($u$ is adjacent to $w_{i-2}$)}{
    
      Find the unique position of $u$ using its distances from $v, w_{i-1}$ and $w_{i-2}$\; 
      
      \If{($u$ covers $CCW(v,w_{i-1})$ and $CCW(v, u)$ is not covered by $w_{i-1}$)}{
      
      $w_{i} \longleftarrow u$\;
      
      $w_{i}.position \longleftarrow$ the unique position of $u$ determined using its distances from $v, w_{i-1}$ and $w_{i-2}$\; 
      
      $D \longleftarrow$ the distance between $u$ and $v$\;
      }
    }
    {
      Find two possible positions of $u$ using its distances from $v$ and $w_{i-1}$\;
      
      \If{(for any of the two possible positions, $u$ covers $CCW(v,w_{i-1})$ and $CCW(v, u)$ is not covered by $w_{i-1}$)}{
      
      $w_{i} \longleftarrow u$\;

      $w_{i}.position \longleftarrow$ the one of the two possible positions of $u$ for which it covers $CCW(v, w_{i-1})$\; 
      
      $D \longleftarrow$ the distance between $u$ and $v$\;

      }
    }

    }
    }

    }
    \caption{\textsc{NextRim}}
    \label{next_rim_algo}
\end{algorithm}

\subsection{Construction of Communication Wheel}\label{sec: wheel cons}

Each sensor node $v$ starts off computations by executing the \textsc{CommunicationWheel} algorithm. The algorithm finds if the node is interior or boundary, and also constructs a communication wheel if it is interior. A pseudocode description of the procedure is presented in Algorithm \ref{wheel_algo}. The algorithm \textsc{CommunicationWheel} is similar to the constructions used in the proof of the Theorem \ref{communication_wheel_from_non_eclipsed}. To construct a communication wheel of a node $v$, if it exists, we first need to find a maximal neighbor. In view of Lemma \ref{eclipse_distance_relation}, the closest neighbor of a node is guaranteed to be a maximal neighbor. After finding the closest neighbor, call it $w_0$, $v$ assigns its position on the $X$-axis and itself at the origin. Then it searches for a common neighbor of $v$ and $w_0$ that covers a boundary-intersection of $v$ and $w_{0}$. If no such node is found, then $v$ is a boundary node. If more than one of such nodes are found, the one closest to $v$ is to be taken. Let us call this node $w_1$. Now the distance of $w_1$ from $v$ and $w_0$ is known. From this data, there are two possible coordinates for $w_1$, one with positive $Y$-coordinate and one with negative $Y$-coordinate. Choose the position for $w_1$ so that its $Y$-coordinate is positive. In other words, $v$ sets its local coordinate system in such a way  that $w_1$ gets positive $Y$-coordinate. Also set the boundary-intersection of $v$ and $w_{0}$ that is covered by $w_1$ as $CCW(v, w_0)$. In other words $v$ sets `counterclockwise' to be the direction in which if one rotates a ray, from the origin towards the positive direction of the $X$-axis, by $\frac{\pi}{2}$, it coincides with the positive direction of the $Y$-axis. While discussing Algorithm \ref{wheel_algo}, `counterclockwise' and `clockwise' will always be with respect to the local coordinate system of node executing the algorithm. Note that since $w_0$ is a maximal neighbor of $v$, $CW(v, w_0)$ is not covered by $w_1$. After fixing the positions of $w_0$ and $w_1$, the subroutine \textsc{NextRim} is recursively called to find the subsequent rim nodes of the communication wheel. Given two consecutive rim nodes $w_{i-1}$ and $w_{i-2}$, having positions fixed, \textsc{NextRim}($v$,$w_{i-1}$,$w_{i-2}$) finds the next rim node $w_i$. The program terminates when either \textsc{NextRim} reports a failure or returns a node that covers $CW(v, w_0)$. A pseudocode description of the \textsc{NextRim} function is presented in Algorithm \ref{next_rim_algo}.

%
%


\begin{theorem}\label{communication_wheel_correctness}
 The algorithm \textsc{CommunicationWheel} is correct, i.e., if $v$ is an interior node then \textsc{CommunicationWheel}$(v)$ constructs a communication wheel of $v$ and declares it as an interior node; and otherwise declares it as a boundary node.  
\end{theorem}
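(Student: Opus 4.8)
The plan is to prove a two-way equivalence: a run of \textsc{CommunicationWheel}$(v)$ reaches its \emph{interior} branch (having collected a rim list $w_0, w_1, \dots, w_m$) if and only if $v$ is interior, and on that branch the list is a communication wheel of $v$. By Corollary \ref{communication_wheel_exists}, establishing this equivalence proves both claims of the theorem simultaneously. The whole argument rests on a \emph{faithfulness} step that I would carry out first: every geometric predicate the algorithm tests from distance data --- whether a node covers a given boundary intersection, whether $CCW(v,\cdot)$ lies in some $\mathcal{Z}(\cdot)$, and the relation $\preceq_v$ --- returns its true value. Fixing the isometry that sends $v$ to the origin, $w_0$ to the positive $X$-axis, and $w_1$ to the upper half-plane pins down a unique orientation; each later rim node is then placed either by trilateration from the three already-placed nodes $v, w_{i-1}, w_{i-2}$ (when $u$ is adjacent to $w_{i-2}$) or, failing that, by the two-circle fit from $v, w_{i-1}$ with the branch that covers $CCW(v,w_{i-1})$. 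Using the general-position assumption, I would argue that these computed coordinates are congruent to the true configuration, so that no predicate is ever mis-evaluated.

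I would then verify that the construction faithfully implements the greedy sweep from the proof of Theorem \ref{communication_wheel_from_non_eclipsed}. By Lemma \ref{eclipse_distance_relation}, the closest neighbor $w_0$ is a maximal neighbor, providing the starting rim node that Theorem \ref{communication_wheel_from_non_eclipsed} requires. The selection rule for $w_1$ and for each node returned by \textsc{NextRim} always picks the \emph{closest} admissible candidate, and I would show, again via Lemma \ref{eclipse_distance_relation}, that ``closest admissible'' forces the chosen node to be maximal: any neighbor dominating it would be strictly closer and, by transitivity of $\preceq_v$, would still meet the selection predicate, contradicting minimality. Hence all collected rim nodes are maximal neighbors of $v$, as the communication-wheel definition demands.

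For the forward direction, assume $v$ is interior. Then every point of $\partial(\mathcal{Z}(v))$ is covered by a neighbor, so Lemma \ref{perimeter_2cover} guarantees that $CCW(v,w_0)$ is covered by a neighbor not eclipsed by $w_0$; thus $w_1$ is found, and inductively each $CCW(v,w_{i-1})$ admits an admissible successor, so no \textsc{NextRim} call returns null. Tracking the angle swept counterclockwise about $v$ exactly as in the proof of Lemma \ref{perimeter_2cover}, the condition $CCW(v,w_i)\notin\mathcal{Z}(w_{i-1})$ makes this angle strictly increase, and since $v$ has finitely many neighbors the loop must terminate; it can only terminate because some $w_m$ covers $CW(v,w_0)$. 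At that point $\{w_0,\dots,w_m\}$ satisfies both clauses of the communication-wheel definition, so a genuine wheel is constructed and $v$ is declared interior.

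For the converse, if $v$ is a boundary node then by Corollary \ref{communication_wheel_exists} it has no communication wheel, and I would show the run must end in a \emph{boundary} branch. Boundary-ness provides a maximal arc of $\partial(\mathcal{Z}(v))$ covered by no neighbor; because the sweep advances counterclockwise through contiguously covered arcs and each new rim node must itself cover the current $CCW$ intersection, the sweep cannot jump over this gap. It therefore reaches the start of the gap with no admissible candidate remaining, so either $w_1$ is never found or some \textsc{NextRim} returns null, and $v$ is correctly declared boundary. I expect the principal obstacle to be the faithfulness step of the first paragraph: one must check that the partial, distance-only reconstruction in \textsc{NextRim} --- in particular the two-candidate case, where the wrong reflection lies in the clockwise half-plane and hence fails the $CCW$-covering predicate --- never selects an incorrect position and never mis-evaluates a covering test, and, in tandem, that the strict angular progress genuinely prevents the sweep from revisiting nodes or skipping the uncovered gap. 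Making these consistency claims rigorous, with general position ruling out degenerate tangencies and collinearities, is the technical core; the control-flow correctness then follows directly from the correspondence with the greedy construction of Theorem \ref{communication_wheel_from_non_eclipsed}.
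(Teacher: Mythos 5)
Your skeleton matches the paper's proof (replicate the greedy sweep of Theorem \ref{communication_wheel_from_non_eclipsed}; closest neighbor is maximal by Lemma \ref{eclipse_distance_relation}; Lemma \ref{perimeter_2cover} supplies successors when $v$ is interior), but the proposal has a genuine gap precisely at the step you yourself flag as ``the technical core'' and then leave unproved: the two-candidate case in \textsc{NextRim}. When the candidate $u$ is not adjacent to $w_{i-2}$, its distance data admit two mirror placements $U_1, U_2$ across the line $v w_{i-1}$, one covering $CCW(v,w_{i-1})$ and the other covering $CW(v,w_{i-1})$. Your parenthetical remark --- that ``the wrong reflection lies in the clockwise half-plane and hence fails the $CCW$-covering predicate'' --- gets the difficulty backwards. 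The danger is not that the wrong reflection passes the test; it is that the \emph{true} position of $u$ might be the clockwise one $U_2$, while the phantom mirror image $U_1$ passes the test. The algorithm (line: ``for any of the two possible positions\ldots'') would then accept $u$ and record coordinates that do not correspond to any actual node, after which every later predicate evaluated with $w_i$'s stored position (including the Case-1 trilaterations that use it as a reference) can be wrong, and the claimed ``faithfulness'' collapses. General position does not remove this ambiguity; it is a genuine combinatorial possibility that has to be excluded.

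The paper excludes it with an argument you never supply: (i) by induction on the greedy selection rule, $w_{i-2} \npreceq_v w_{i-1}$ --- otherwise $w_{i-1}$ would have satisfied the admissibility criteria at step $i-2$ and, by Lemma \ref{eclipse_distance_relation}, would have been strictly closer to $v$ than $w_{i-2}$, contradicting that $w_{i-2}$ was the closest admissible choice --- and hence $CW(v,w_{i-1}) \in \mathcal{Z}(w_{i-2})$; (ii) if $u$ actually sat at $U_2$, then $u$ would also cover $CW(v,w_{i-1})$, so $\mathcal{Z}(u) \cap \mathcal{Z}(w_{i-2}) \neq \emptyset$, forcing $u$ and $w_{i-2}$ to be adjacent and contradicting the hypothesis of this case. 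So the true position must be $U_1$, which is exactly what the algorithm records. Note that this rescue depends on the interplay of the case split (non-adjacency to $w_{i-2}$), the closest-candidate rule, and Lemma \ref{eclipse_distance_relation}; it is not a routine consistency check, and without it your equivalence between the run of the algorithm and the construction of Theorem \ref{communication_wheel_from_non_eclipsed} (and hence both directions of your argument, including the boundary case, whose sweep could also be corrupted by a phantom placement) does not go through.
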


\begin{proof}
 The algorithm replicates the proof of Theorem \ref{communication_wheel_from_non_eclipsed}. The algorithm starts off with fixing a maximal neighbor of $v$ as the first rim node $w_0$. Then it recursively finds the rim nodes $w_i$ such that $w_i$ covers $CCW(v,w_{i-1})$ and $CCW(v, w_i)$ is not covered by $w_{i-1}$. The algorithm terminates when there is no such $w_i$ or when $w_{i}$ covers $CW(v,w_0)$. Thus in view of the proof of Theorem \ref{communication_wheel_from_non_eclipsed}, we only need to show that these steps are correctly executed.

 \paragraph{Initialization:} The closest neighbor of $v$ is set as $w_0$. Hence, $w_0$ is a maximal neighbor of $v$ by Lemma \ref{eclipse_distance_relation}. Note that in order to compute the communication wheel of $v$, if it exists, first we need to fix the positions of (i.e., assign virtual coordinates to) at least three nodes of the communication wheel, preserving their mutual distances. So, first $v$ is assigned with virtual coordinates $(0,0)$. If the distance between $v$ and $w_0$ is $d$, then the coordinates of $w_0$ are set as $(d, 0)$. Now we have to check if there is a common neighbor $w$ of $v$ and $w_0$ that covers a boundary intersection of $v$ and $w_0$, and such that $w \npreceq_{v} w_{0}$. For any common neighbor $w$ of $v$ and $w_0$, this can be easily checked from $d(v, w_0), d(v, w)$ and $d(w, w_0)$. If no such $w$ is found, then $v$ is obviously a boundary node. Otherwise one such node that is closest to $v$ is set as the next rim node $w_1$. From $d(v, w_1)$ and $d(w_0, w_1)$, two possible coordinates of $w_1$ can be found, one with positive $Y$-coordinate and one with negative $Y$-coordinate. Then $v$ sets its local coordinate system in such a way that $w_1$ gets positive $Y$-coordinate, and hence $w_1$ covers $CCW(v, w_0)$. Recall that here counterclockwise and clockwise is defined with respect to the local coordinate system of $v$ as described in Section \ref{sec: wheel cons}.
 
 \paragraph{Recursion:} After $w_0$ and $w_1$ are fixed, the algorithm will recursively call \textsc{NextRim}($v$,$w_{i-1}$,$w_{i-2}$) to find the next rim node $w_{i}$, if it exists. In the for loop (line 2 in Algorithm \ref{next_rim_algo}), the common neighbors of $v$ and $w_{i-1}$ are scanned through to find nodes $u \in \mathcal{N}(w_{i-1}) \cap \mathcal{N}(v)$ such that $u$ covers $CCW(v,w_{i-1})$ and $CCW(v, u)$ is not covered by $w_{i-1}$. Among these nodes, the one closest to $v$ is set as the next rim node $w_i$. If no such node is found, then $v$ is a boundary node. Notice that in order to check whether $u$ covers $CCW(v,w_{i-1})$ or not, the exact position of $u$ needs to be known. As we scan through $\mathcal{N}(w_{i-1}) \cap \mathcal{N}(v)$, there are two cases to consider:

   \textbf{Case 1.} Suppose that $u$ is adjacent to $w_{i-2}$. Now the positions of $v, w_{i-1}$ and $w_{i-2}$ are known. Hence the position of $u$ can ascertained from its distances from $v, w_{i-1}$ and $w_{i-2}$. Once the position of $u$ is found, it can be checked whether it covers $CCW(v,w_{i-1})$ and also whether $w_{i-1}$ covers $CCW(v, u)$.  
  
  \textbf{Case 2.} Suppose that $u$ is not adjacent to $w_{i-2}$. Two possible positions of $u$ can be found from its distance from $v$ and $w_{i-1}$. Call these two possible positions $U_1$ and $U_2$. $U_1$ and $U_2$ are mirror images of each other with respect to the line joining $v$ and $w_{i-1}$. $CCW(v,w_{i-1})$ and $CW(v,w_{i-1})$ are also mirror images of each other with respect to the line joining $v$ and $w_{i-1}$. Hence if a node at $U_1$ covers $CCW(v,w_{i-1})$, then a node at $U_2$ covers $CW(v,w_{i-1})$ as well. Similarly if a node at $U_1$ covers neither $CCW(v,w_{i-1})$ nor $CW(v,w_{i-1})$, then the same is true for a node at $U_2$. So consider the following three possibilities:  
  
   \textbf{Case 2a.} If for both positions $U_1$ and $U_2$, no boundary intersection between $v$ and $w_{i-1}$ is covered, then $u$ does not meet the desired criteria that it is to cover $CCW(v,w_{i-1})$.
   
   \textbf{Case 2b.} Suppose that for both positions $U_1$ and $U_2$, both of the boundary intersections between $v$ and $w_{i-1}$ are covered. This can not happen as this would imply that $w_{i-i} \preceq_v u$ and hence is adjacent to $w_{i-2}$, contradicting our assumption.
   
   \textbf{Case 2c.} Suppose that $u$, if situated at $U_1$, covers $CCW(v,w_{i-1})$ (and not $CW(v,w_{i-1})$). Hence $u$, if it is at $U_2$, would cover $CW(v,w_{i-1})$ and would not cover $CCW(v,w_{i-1})$. In this case, the algorithm determines the position of $u$ to be $U_1$. We shall prove that $U_1$ is indeed the correct position of $u$. Suppose on the contrary that the actual position of $u$ is $U_2$. First observe that $CW(v,w_{i-1}) \in \mathcal{Z}(w_{i-2}) \cap \partial(\mathcal{Z}(v))$. Otherwise, it implies that $w_{i-2} \preceq_v w_{i-1}$. We argue that this is impossible. For $i = 2$, it is obvious since $w_0 \npreceq_v w_1$. Recall that $w_0$ is the closest neighbor, and hence is a maximal neighbor, of $v$. For $i > 2$, we assume as induction hypothesis that rim nodes $w_{i-1}, \ldots, w_1, w_0$ are successfully found by the algorithm. Also each $w_{j}$, for $j = 1, \ldots, i-1$,  must satisfy the two criteria: 1) $w_j$ covers $CCW(v, w_{j-1})$ and 2) $CCW(v, w_j)$ is not covered by $w_{j-1}$. In fact, as mentioned earlier, the algorithm chooses as $w_j$ the closest among the nodes that satisfy these two criteria. So in particular, $w_{i-2}$ is the closest neighbor of $v$ such that 1) $w_{i-2}$ covers $CCW(v, w_{i-3})$ and 2) $CCW(v, w_{i-2})$ is not covered by $w_{i-3}$. Clearly if $w_{i-2} \preceq_v w_{i-1}$, $w_{i-1}$ also satisfies the two aforesaid conditions. But by Lemma \ref{eclipse_distance_relation}, $w_{i-1}$ is closer to $v$ than $w_{i-2}$. This contradicts the fact that $w_{i-2}$ is the closest node satisfying the two aforesaid conditions. So we have $CW(v,w_{i-1}) \in \mathcal{Z}(w_{i-2})$. Also $CW(v,w_{i-1}) \in \mathcal{Z}(u)$ as $u$ is assumed to be at $U_2$. Hence $\mathcal{Z}(w_{i-2}) \cap \mathcal{Z}(u) \neq \emptyset$. This is a contradiction as $u$ and $w_{i-2}$ are not adjacent.

 \paragraph{Termination:} It is easy to see that the algorithm terminates. 
\end{proof}

\subsection{Leader Election}

Once a node identifies itself as interior or boundary, it announces the result to all its neighbors. Hence, every node can determine if it is a strongly interior node or not. Since the strong interior is connected and the nodes have unique id's, the strongly interior nodes can elect a leader among themselves by executing a leader election protocol \cite{lynch}. 

\subsection{Propagation}

Starting from the leader, different nodes will gradually get localized via message passing. The correctness of the process will follow from the discussions in this subsection and the proofs of Theorem \ref{maintheorem2} and \ref{maintheorem}. There are five types of messages that a sensor node can send to another node:

\begin{enumerate}
 \item ``$I\ am\ at\ \ldots$''
 \item ``$You\ are\ at\ \ldots$''
 \item ``$Construct\ wheel\ with\ me\ at\ \ldots\ and\ v\ at\ \ldots$''
 \item ``$Construct\ wheel\ with\ me\ at\ \ldots,\ you\ at\ \ldots,\ v\ at\ \ldots\ and\ find\ u$''
 \item  ``$u\ is\ at\ \ldots$''.  
\end{enumerate}


The nodes of the network will be localized in the local coordinate system of the leader $v_l$ set during its execution of Algorithm \ref{wheel_algo}. Henceforth, this coordinate system will be referred to as the \emph{global coordinate system}. So the leader first localizes itself by setting its coordinates to $(0,0)$. Any non-leader node $u$ is localized by either receiving a  ``$You\ are\ at\ \ldots$'' message or receiving at least three ``$I\ am\ at\ \ldots$'' messages. In the first case, some node has calculated the coordinates of $u$ and has sent it to $u$. In the second case, $u$ receives the coordinates of at least three neighbors and therefore, can calculate its own coordinates. When a node is localized, it announces its coordinates to all its neighbors. After setting its coordinates to $(0,0)$ and $v_l$ initiates the localization of $\mathcal{H}_{v_l}$ (See Theorem \ref{maintheorem2}). It first announces its coordinates to all its neighbors via the message ``$I\ am\ at\ (0,0)$''. During the construction of its communication wheel, $v_l$ had assigned coordinates to the rim nodes. So $v_l$ sends these coordinates to the corresponding rim nodes via the message ``$You\ are\ at\ \ldots$''. Let us denote the communication wheel of $v_l$ as $\mathcal{W}(v_l)$ and the set of all rim nodes as $\mathcal{R}im(v_l)$. When a rim node receives this message, it sets its coordinates accordingly and announces it to all its neighbors via the message ``$I\ am\ at\ \ldots$''. Notice that a rim node does not need to send this message to $v_l$. There are multiple such modifications that can be made to reduce the number of messages used in the algorithm. But we do not mention them for simplicity of the description. Now if a neighbor of $v_l$ is adjacent to at least two nodes of $\mathcal{R}im(v_l)$, then it can localize itself, since it will receive ``$I\ am\ at\ \ldots$'' messages from at least three nodes, i.e., one from $v_l$ and at least two from $\mathcal{R}im(v_l)$. But if a neighbor of $v_l$ is adjacent to only one vertex from $\mathcal{R}im(v_l)$, then it may not be localized. To resolve this, $v_l$ computes $|\mathcal{N}(u) \cap \mathcal{R}im(v_l)|$ for all $u \in \mathcal{N}(v_l)$. If it finds a $u \in \mathcal{N}(v_l)$ with $\mathcal{N}(u) \cap \mathcal{R}im(v_l) = \{v_{i}\}$, it sends the message ``$Construct\ wheel\ with\ me\ at\ \ldots\ and\ v_{i+1}\ at\ \ldots$'' to $v_i$, where $v_{i+1}$ is a neighboring rim node of $v_i$ in $\mathcal{W}(v_l)$. When $v_i$ receives this message from $v_l$, it does the following. Since $v_l$ is a strongly interior node, $v_i$ must be an interior node. Therefore, $v_i$ has already computed the communication wheel $\mathcal{W}(v_i)$ and coordinates of each of its nodes with respect to its local coordinate system. Since $v_l$ is a maximal neighbor of $v_i$ (by Lemma \ref{maximal_dual_relation}), it is adjacent to at least two nodes of $\mathcal{R}im(v_i)$ (by Lemma \ref{1_implies_eclipsed}). Hence, $v_i$ can compute the coordinates of $v_l$ with respect to its local coordinate system. Let $\mathcal{W}'$ be the globally rigid graph $v_l \cup \mathcal{W}(v_i)$. Now, from the proof of Theorem \ref{maintheorem2}, it is known that $v_{i+1}$ is adjacent to at least three nodes of $\mathcal{W}'$. Hence,  $v_i$ can also compute the coordinates of $v_{i+1}$ with respect to its local coordinate system. So, $v_i$ has the coordinates of all nodes of $\mathcal{W}''$ $= v_{i+1} \cup \mathcal{W}'$ with respect to its local coordinate system. Now, $v_i$ will compute the positions of all nodes of $\mathcal{W}''$  with respect to the global coordinate system set by $v_l$. Let us call them the \emph{true positions} of the nodes. Note that $v_{i}$ knows the true positions of at least three nodes of $\mathcal{W}''$, namely, itself, $v_l$, and $v_{i+1}$. With this information, $v_{i}$ can determine the formula that transforms its local coordinate system to the global coordinate system. Hence, $v_{i}$ computes the true positions of all nodes in $\mathcal{W}''$ and informs them via  ``$You\ are\ 
at\ \dots$'' messages. Hence, all nodes in $\mathcal{W}''$ will be localized and will announce their locations to all their neighbors. Since $u$ is adjacent to at least three nodes in $\mathcal{W}''$ (from the proof of Theorem \ref{maintheorem2}), it will also get localized. Therefore, we see that every neighbor of $v_l$ eventually gets localized.

The localization propagates as each strongly interior node localizes its neighbors. However, a strongly interior node $v$ can compute the positions of its neighbors only with respect to its local coordinate system. Hence, in order to compute the true positions (i.e., to perform coordinate transformation), it needs to know its true position and that of at least two neighbors. Hence, when a localized strongly interior node $v$ receives at least two ``$I\ am\ at\ \dots$'' messages, it starts to localize its neighbors in the following way. Let $u$ be a neighbor of $v$. If $u$ is adjacent  at least two nodes of $\mathcal{R}im(v)$, then $v$ can compute the position of $u$ in terms of its local coordinate system. Otherwise, if $\mathcal{N}(u) \cap \mathcal{R}im(v) = \{v_{i}\}$, then $v$ sends the message ``$Construct\ wheel\ with\ me\ at\ \ldots,\ you\ at\ \ldots,\ v_{i+1}\ at\ \ldots\ and\ find$ $u$'' to $v_i$, where $v_{i+1}$ is a neighboring rim node to $v_i$ in $\mathcal{W}(v)$, and positions mentioned in the message are given in local coordinates of $v$. Again, as   $v_i$ is a maximal neighbor of $v$, by Lemma \ref{maximal_dual_relation} and \ref{1_implies_eclipsed}, $v$ is either in $\mathcal{W}(v_i)$ or adjacent to at least three nodes in $\mathcal{W}(v_i)$. Also, $v_{i+1}$ is either in $\mathcal{W}' = \mathcal{W}(v_i) \cup v$ or adjacent to at least three nodes in $\mathcal{W}'$ (from the proof of Theorem \ref{maintheorem2}). Hence, from the data received from $v$, $v_i$ can compute the positions of all the nodes in $\mathcal{W}'' = \mathcal{W}' \cup v_{i+1}$ in terms of the local coordinates of $v$. From the proof of Theorem \ref{maintheorem2}, $u$ is either in $\mathcal{W}''$ or adjacent to at least three nodes in $\mathcal{W}''$. So $v_i$ can compute the position of $u$ in terms of the local coordinate system of $v$, and then sends the information back to $v$ via the message ``$u\ is\ at\ \ldots$''. Hence, $v$ computes the positions of all its neighbors in its local coordinate system. So, $v$ now knows the positions of three nodes (namely, itself and the two nodes from which it has received ``$I\ am\ at\ \dots$'' message) with respect to both its local coordinate system and the global coordinate system set by the leader. Hence, $v$ can find the formula that transformations its local coordinate system to the global coordinate system. Hence, $v$ computes the true positions of all its neighbors and then informs them via ``$You\ are\ at\ \dots$'' messages. However, it still remains to prove that every non-leader localized strongly interior node $v$ always receives at least two ``$I\ am\ at\ \dots$'' messages, that triggers the propagation.  Since $v$ is localized, either it has received three ``$I\ am\ at\ \dots$'' messages or one ``$You\ are\ at\ \dots$'' message. If it is the first case, then we are done. In the later case, $v$ receives the ``$You\ are\ at\ \dots$'' message from a localized interior node, say $v'$. Then by Lemma \ref{r_neighbor_is_adj_to_r_neighbor}, $v$ is either in the communication wheel of $v'$, or adjacent to at least one of its rim nodes. Observe that $v'$ is localized and has also localized all its rim nodes. So, $v$ will get least two ``$I\ am\ at\ \dots$'' messages, as all localized nodes announce their positions. Therefore, all strongly interior and non-isolated weakly interior nodes get localized, while some some boundary nodes and isolated weakly interior nodes may not get localized. Therefore, all strongly interior and non-isolated weakly interior nodes get localized. Also, if a localized weakly interior node receives ``$I\ am\ at\ \dots$'' messages from at least two nodes, it can localize all the rim nodes and also neighbors that are adjacent to at least two rim nodes.

%



\begin{figure}
\centering
\includegraphics[width=.6\textwidth]{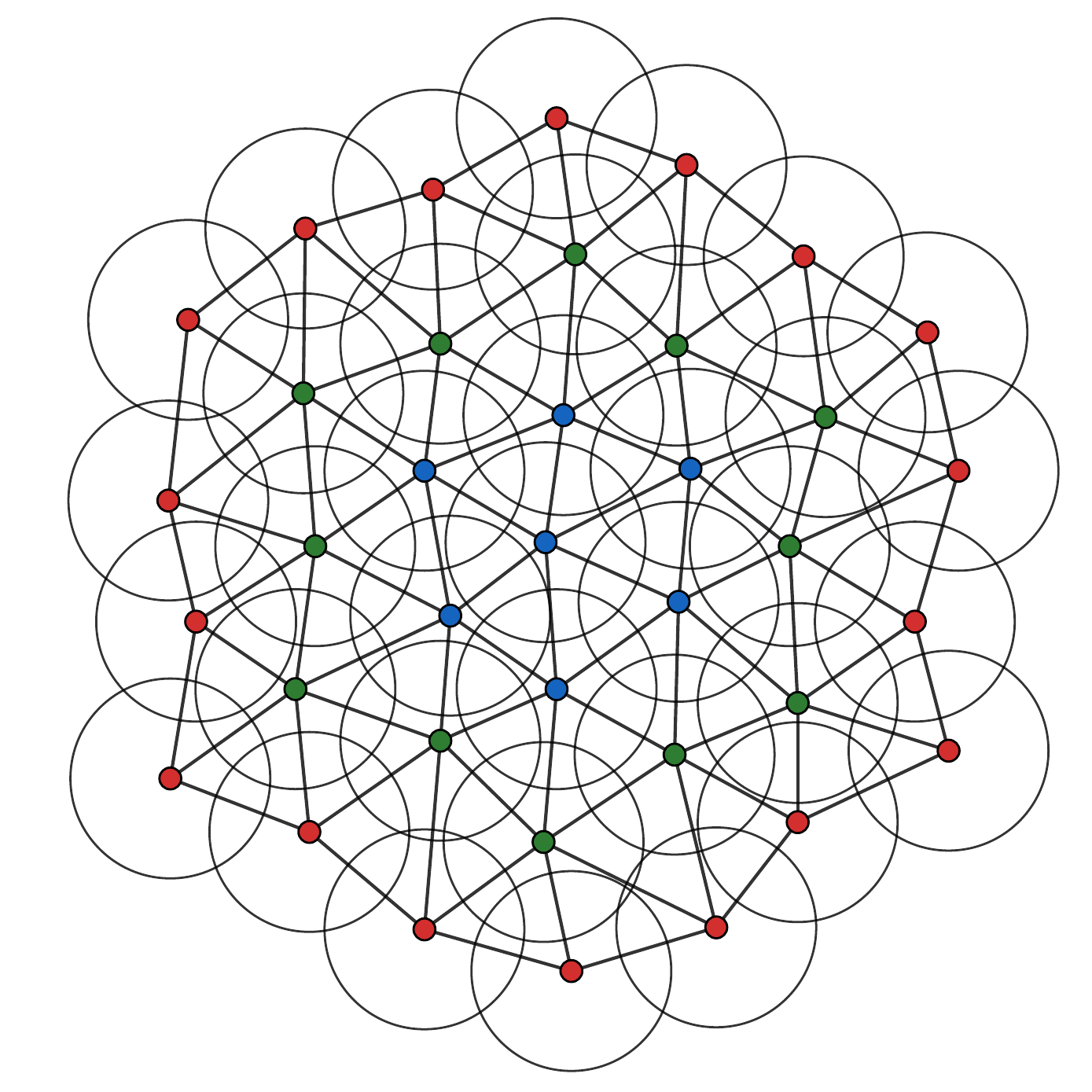}
\caption{The red, green and blue nodes are respectively boundary, weakly interior and strongly interior nodes. It is easy to see that trilateration does not progress beyond the base step for any choice of the initial triangle. However, our algorithm always localizes all the nodes of the network from any initial strongly interior node. This structure can be extended arbitrarily. Hence for any $n \in \mathbb{N}$, we have a network of size $\geq n$, such that 1) it is always entirely localized by our algorithm, 2) but trilateration fails to localize more than 3 nodes for any choice of the initial triangle.}
\label{honey}
\end{figure}

\section{Concluding Remarks}

  
  Our algorithm works under the condition that the strong interior of the network is connected. Relaxing this condition, it would be interesting to characterize the conditions under which localization starting from different components of the strong interior can be stitched together. It would be also interesting to study impact noisy distance measurement on our algorithm. Our algorithm also works under the strong assumption of uniform communication range. An important direction of future research would be to see if our approach can be extend to networks with sensors having irregular communication range, e.g., quasi unit disk networks \cite{kuhn2008ad}. Another problem is to compare the class of networks that are fully localized by our algorithm to those that are fully localized by trilateration. The example in Fig. \ref{honey} shows  a class of network in which trilateration does not progress beyond the base step for any choice of the initial triangle, but our algorithm always localizes all the nodes from any initial strongly interior node.

   \paragraph{Acknowledgements.} The first author is supported by NBHM, DAE, Govt. of India and the third author is supported by CSIR, Govt. of India. This work was done when the second author was at Jadavpur University, Kolkata, India, supported by UGC, Govt. of India. We would like to thank the anonymous reviewers for their valuable comments which helped us to improve the quality and presentation of the paper.

\bibliographystyle{plain}
\bibliography{wheel}

\appendix
\newpage
%
%
%
%
%
%
%



  \section{Appendix: Simulation Results}\label{compare}

\begin{figure}[thb!]
\centering
\subcaptionbox[Short Subcaption]{
        \label{configuration2_fig_a}
}
[
    0.48\textwidth 
]
{
    
    \includegraphics[width=.48\linewidth]{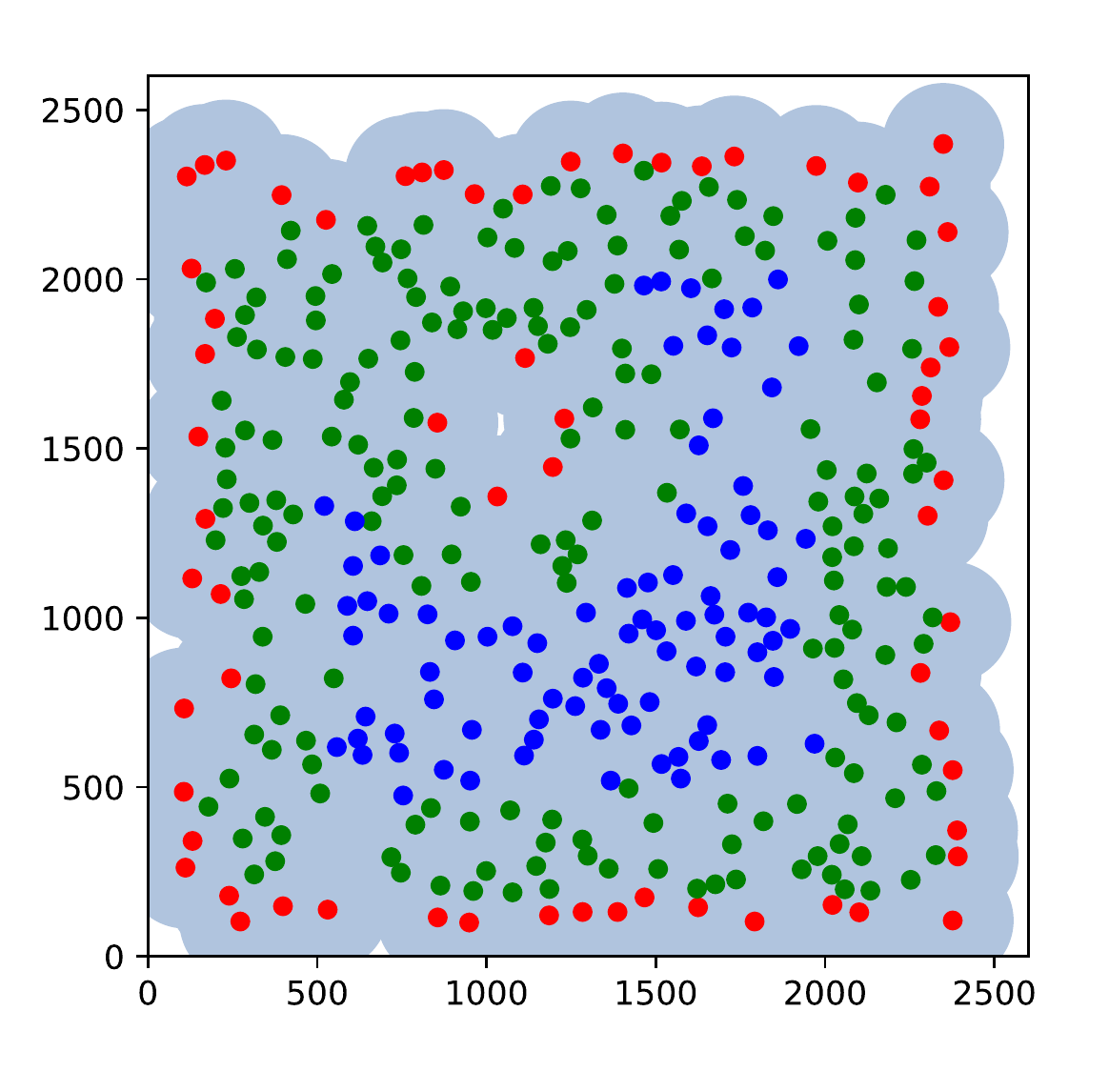}
}
\hfill 
\subcaptionbox[Short Subcaption]{
    \label{configuration2_fig_b}
}
[
    0.48\textwidth 
]
{
    \includegraphics[width=.48\linewidth]{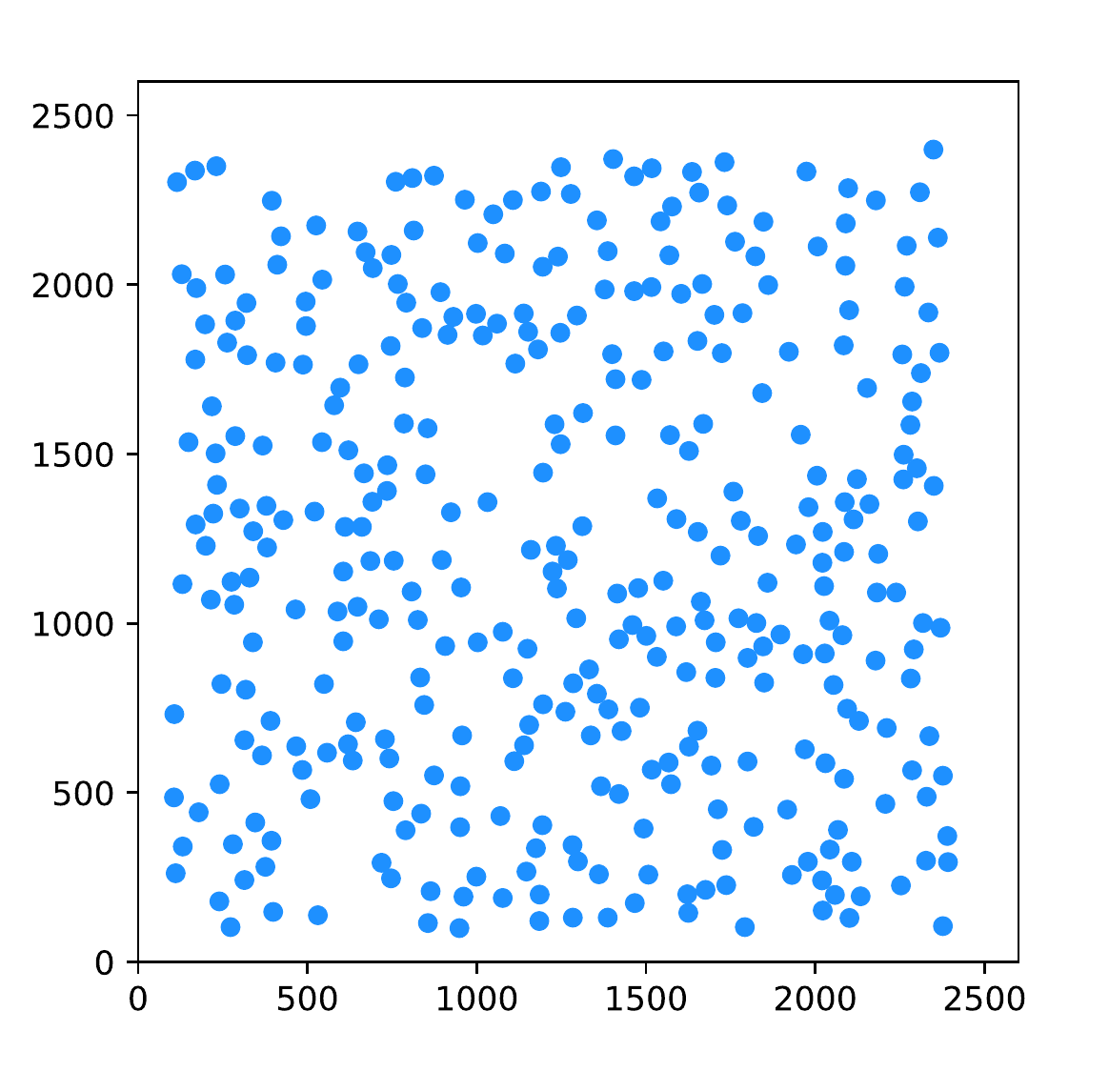}

}
\\
\subcaptionbox[Short Subcaption]{
       \label{sparse}
}
[
    0.48\textwidth 
]
{
    \includegraphics[width=.48\linewidth]{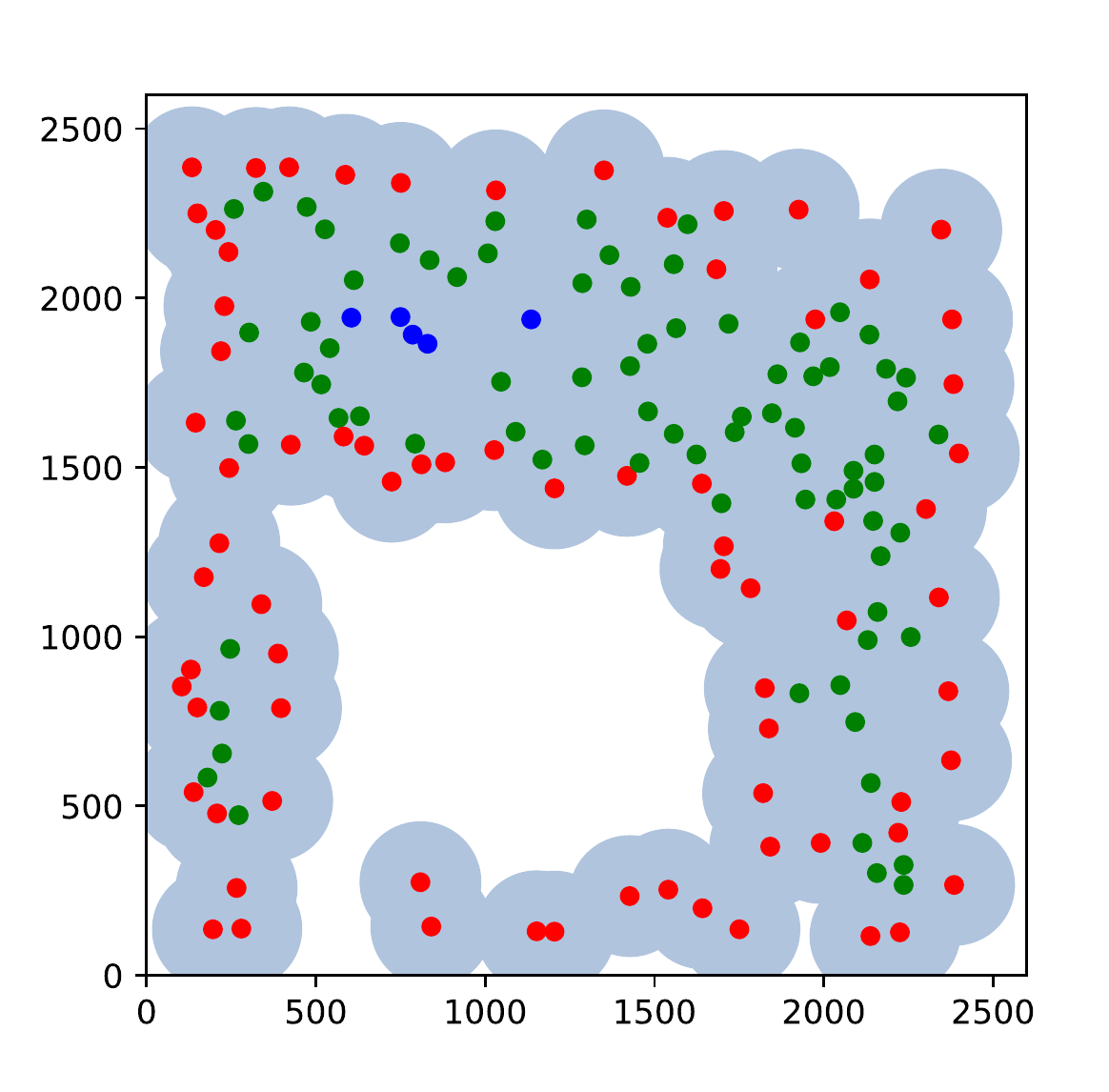}

}
\hfill
\subcaptionbox[Short Subcaption]{
       \label{configuration2_fig_d}
}
[
    0.48\textwidth 
]
{
   \includegraphics[width=.48\linewidth]{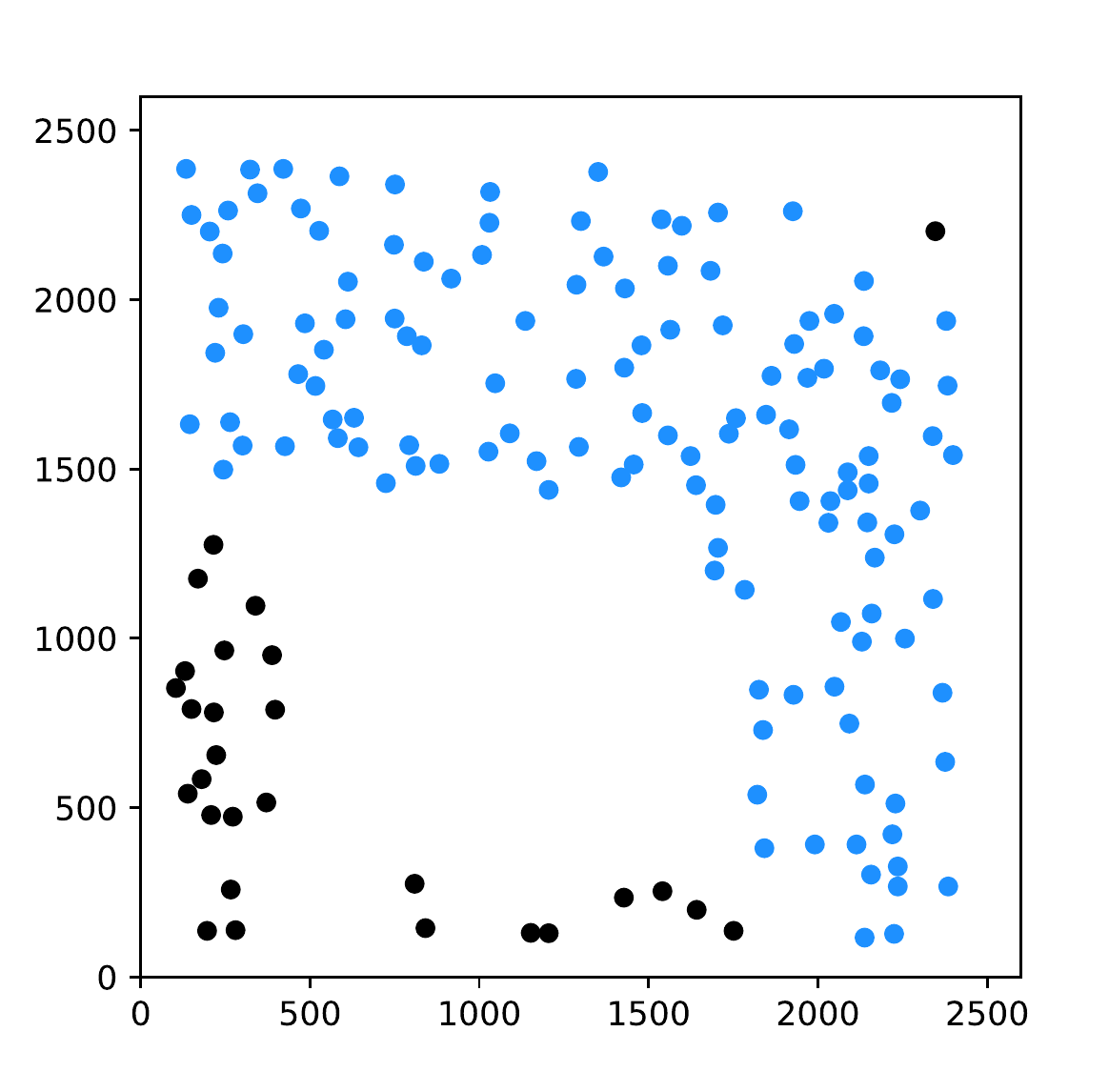}
}

\caption[Short Caption]{a) A random network of 350 sensor nodes. The red, green and deep blue nodes are respectively boundary, weakly interior and strongly interior nodes. Disks of radii $\frac{r}{2}$ around each node is shown in pale blue. b) The network being dense, all 350 nodes are successfully localized by our algorithm. c) A relatively sparse and irregular network of 160 sensor nodes. d) Despite having only 5 strongly interior nodes, our algorithm successfully localizes 132 nodes, and thus achieves the maximum number of nodes localized by trilateration starting from any possible triangle (See Fig.\ref{best}). The blue nodes are successfully localized, and the black nodes are not localized.}
\label{test}
\end{figure}

\begin{figure}[thb!]
\centering
\subcaptionbox[Short Subcaption]{
       \label{best}
}
[
    0.48\textwidth 
]
{
    
    \includegraphics[width=.48\linewidth]{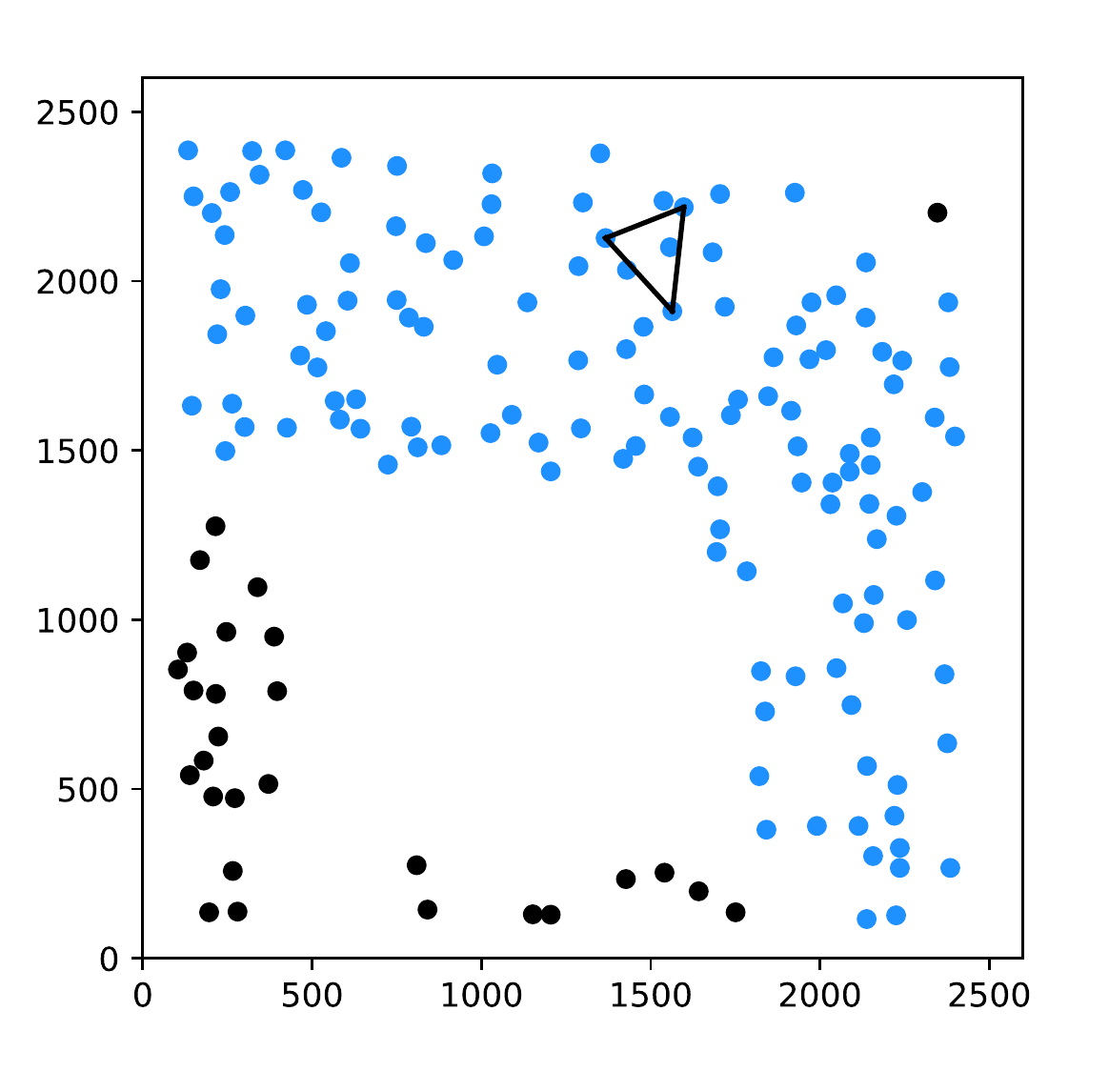}
}
\hfill 
\subcaptionbox[Short Subcaption]{
    \label{configuration2_fig_b}
}
[
    0.48\textwidth 
]
{
    \includegraphics[width=.48\linewidth]{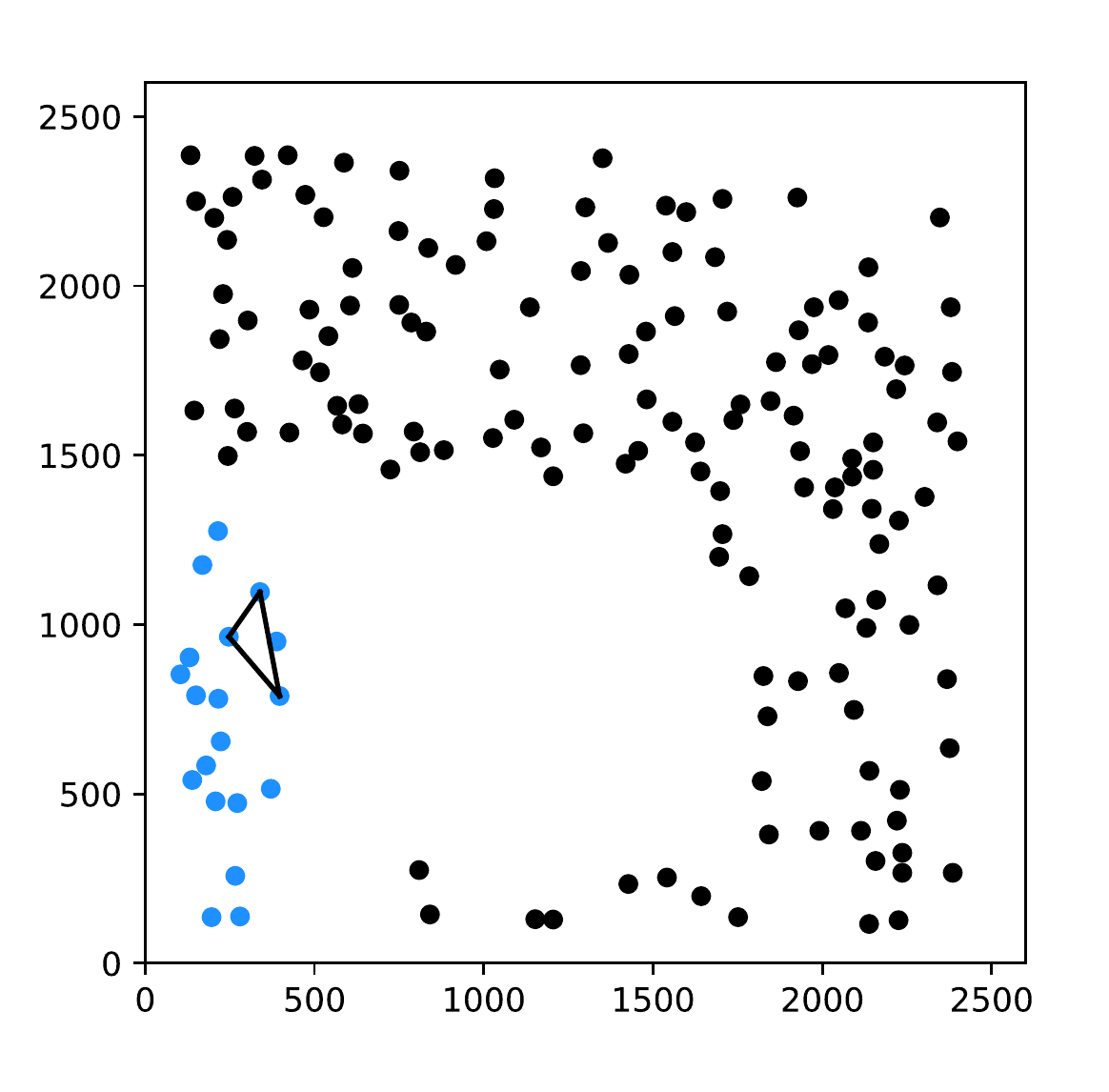}

}
\\
\subcaptionbox[Short Subcaption]{
       \label{stop}
}
[
    0.48\textwidth 
]
{
    \includegraphics[width=.48\linewidth]{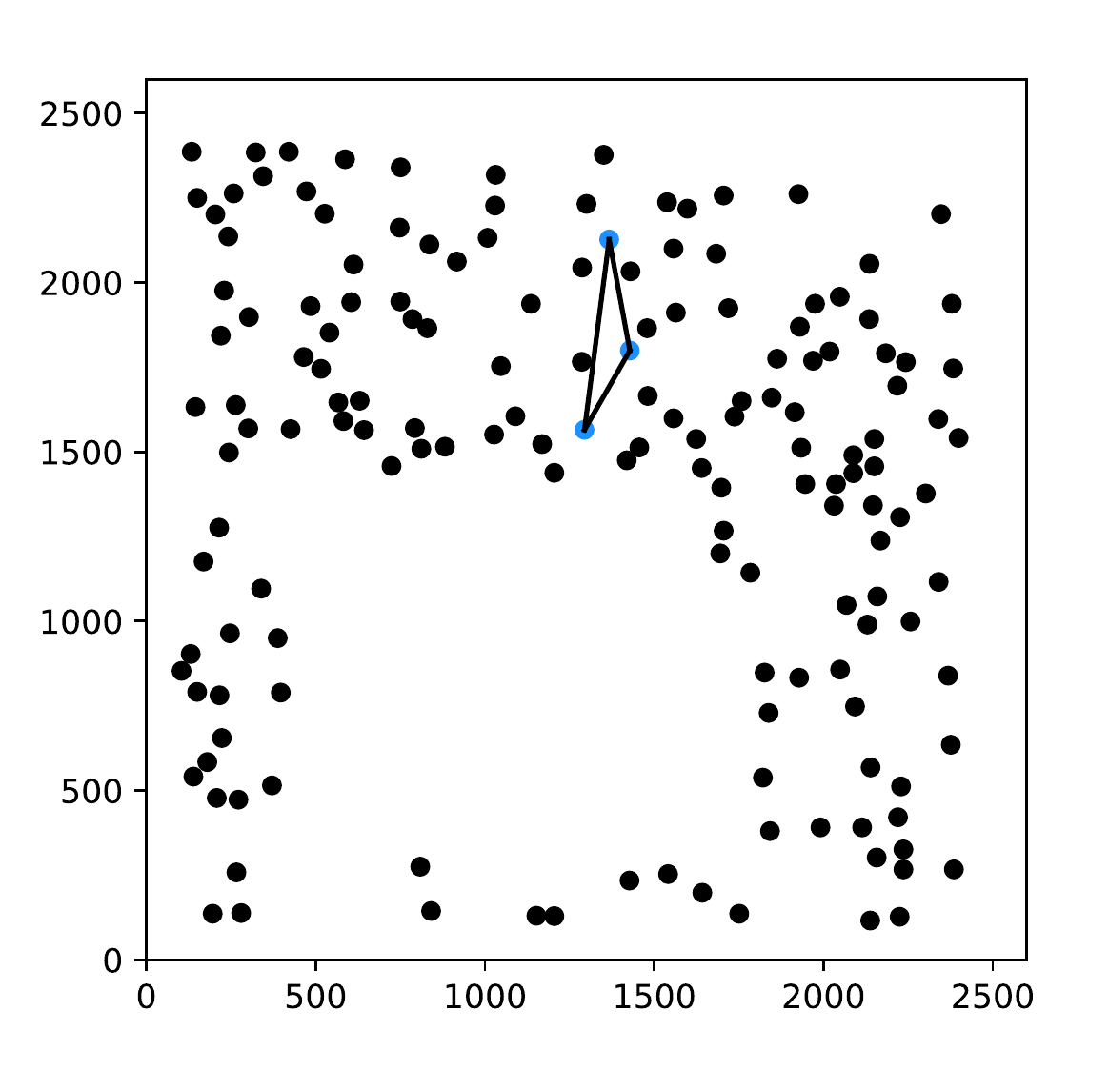}

}
\hfill
\subcaptionbox[Short Subcaption]{
        \label{stop2}
}
[
    0.48\textwidth 
]
{
   \includegraphics[width=.48\linewidth]{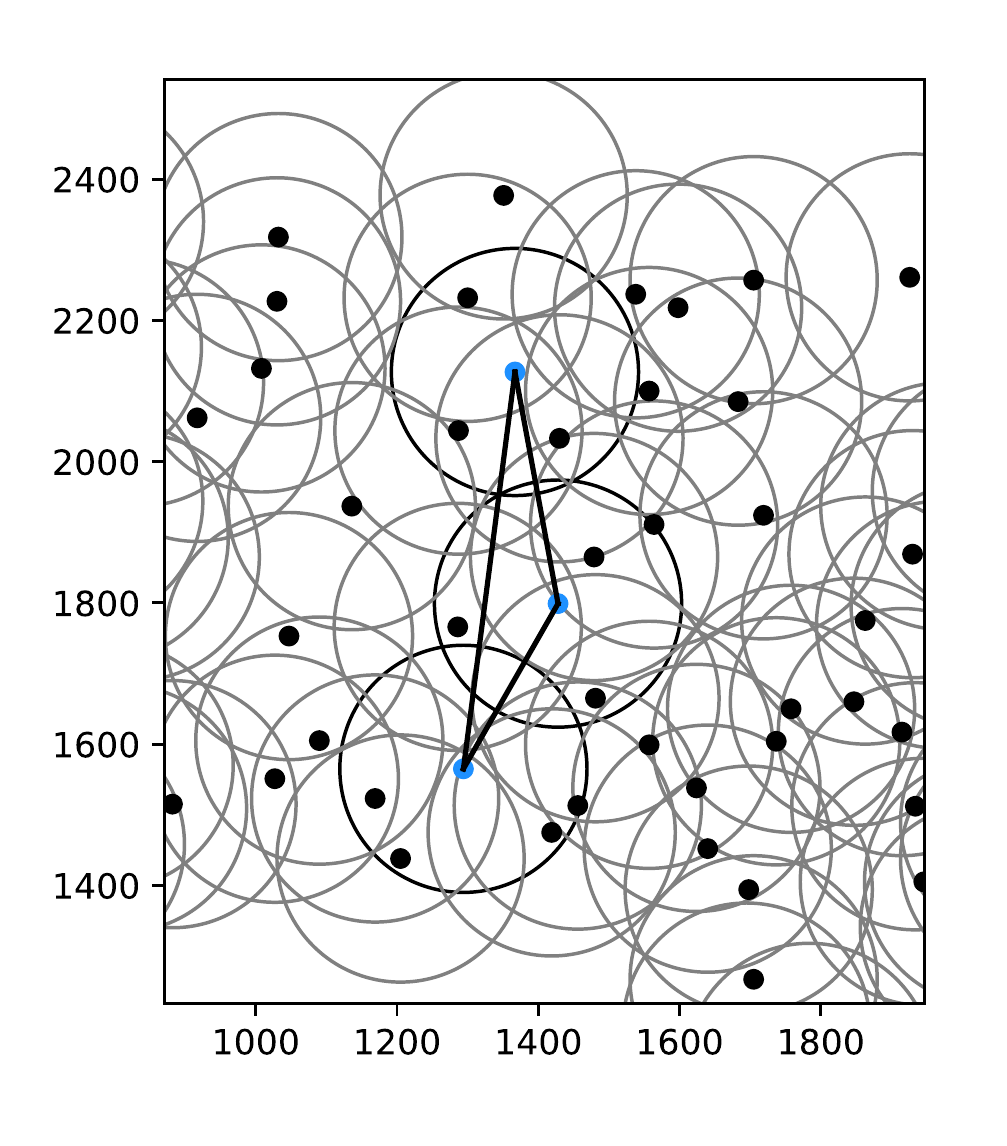}
}

\caption[Short Caption]{The same network from Fig. \ref{sparse} is consider and trilateration is attempted from different initial triangles. a) In the overall best result, 132 nodes are localized b) In this case, only 19 nodes are localized. c) In the worst case, trilateration does not progress beyond the initial triangle. d) A closer view of Fig. \ref{stop}. }
\label{test2}
\end{figure}

\end{document}